\documentclass[11pt]{article}

\usepackage{amssymb,amsthm,amsbsy}

\theoremstyle{definition}
\newtheorem{remark}{Remark}

\theoremstyle{plain}

\usepackage{nath1}
\usepackage{hyperref}


\textwidth=20.6cm
\oddsidemargin=-1.7cm

\topmargin=-2.3cm
\textheight=24.2cm

\binoppenalty=10000
\relpenalty=10000

\mathindent=-3pc
\def\eqref#1{{\rm(\ref{#1})}}

\def\lin{\partial}

\def\bA{\mathbf A}

\def\bB{\mathbf B}
\def\bM{\mathbf M}

\def\bb{\boldsymbol\beta}
\def\bW{\mathbf W}
\def\bV{\mathbf V}
\def\bc{\mathbf c}

\def\br{\mathbf r}
\def\be#1{\mathbf e_{#1}}
\def\bE{\mathbf E}
\def\bF{\mathbf F}
\def\bL{\mathbf L}
\def\bP{\mathbf P}
\def\bQ{\mathbf Q}
\def\bR{\mathbf R}
\def\bS{\mathbf S}

\def\bU{\mathbf U}

\def\diag{\delta}

\def\nd{\mathsf{n.d.}}

\newcommand{\mg}{\mathfrak{g}}
\newcommand{\pd}{{\partial}}
\newcommand{\cprime}{\/{\mathsurround=0pt$'$}}

\newcommand{\gl}{\mathfrak{gl}}
\newcommand{\ve}{\varepsilon}

\newcommand{\nv}{m}
\newcommand{\al}{\alpha}

\newcommand{\sm}{d}
\newcommand{\fzc}{\mathcal{A}}
\newcommand{\nsy}{\mathbf{\Gamma}}

\newcommand{\la}{{\lambda}}
\newcommand{\fik}{\mathbb{K}}
\newcommand{\Com}{\mathbb{C}}

\newcommand{\twl}{\mathfrak{T}}
\newcommand{\tgr}{\mathcal{G}}

\newcommand{\tpa}{\mathbb{L}}

\newcommand{\vf}{\varphi}

\newcommand{\sla}{\mathfrak{L}}
\newcommand{\cen}{\mathfrak{Z}}
\newcommand{\pot}{P}
\newcommand{\bet}{\beta}

\newcommand{\zp}{\mathbb{Z}_{\ge 0}}
\newcommand{\zpin}{\mathbb{Z}_{\ge 0}\cup\infty}
\newcommand{\zsp}{\mathbb{Z}_{>0}}

\newcommand{\beq}{\begin{equation}}
\newcommand{\ee}{\end{equation}}
\newcommand{\lb}{\label}
\newcommand{\er}{\eqref}
\newcommand{\tr}{\mathop{\mathrm{tr}}\nolimits}

\newtheorem{proposition}{Proposition}

\delimgrowth=2

\begin{document}
\title{On construction of symmetries and recursion operators from  
zero-curvature representations and the Darboux--Egoroff system}

\author{Sergei Igonin \\
\footnotesize Department of Mathematics, Utrecht University, \\
\footnotesize P.O. Box 80010, 3508 TA Utrecht, the Netherlands \\
\footnotesize E-mail: s-igonin@yandex.ru
\\[2ex]
Michal Marvan \\
\footnotesize Mathematical Institute, Silesian University in Opava, \\
\footnotesize Na Rybn\'\i\v cku 1, 746 01 Czech Republic \\
\footnotesize E-mail: Michal.Marvan@math.slu.cz}

\date{}

\maketitle

\abstract{
The Darboux--Egoroff system of PDEs with any number $n\ge 3$ of independent variables plays an essential role in the problems of describing $n$-dimensional flat diagonal metrics of Egoroff type and Frobenius manifolds. We construct a recursion operator and its inverse for symmetries of the Darboux--Egoroff system and describe some symmetries generated by these operators. 
   
The constructed recursion operators are not pseudodifferential, but are B\"acklund autotransformations for the linearized system whose solutions correspond to symmetries of the Darboux--Egoroff  system. For some other PDEs, recursion operators of similar types were considered previously by Papachristou, Guthrie, Marvan, Pobo\v{r}il, and Sergyeyev. 
   
In the structure of the obtained third and fifth order symmetries of the Darboux--Egoroff system, one finds the third and fifth order flows of an $(n-1)$-component vector modified KdV hierarchy. 
   
The constructed recursion operators generate also an infinite number of nonlocal symmetries. In particular, we obtain a simple construction of nonlocal symmetries that were studied by Buryak and Shadrin in the context of the infinitesimal version of the Givental--van de Leur twisted loop group action on the space of semisimple Frobenius manifolds. 
   
We obtain these results by means of rather general methods, using only the zero-curvature representation of the considered PDEs.}

\smallskip

\bigskip
\noindent{\it Keywords\/}: 
symmetries of PDEs; recursion operators; zero-curvature representations; the Darboux-Egoroff system; 
vector modified KdV hierarchy; twisted loop algebras

\medskip
\noindent 2000 \emph{AMS Mathematics Subject Classification}:
37K10, 37K35, 37K30

\section{Introduction}
\lb{sec-intr}

Symmetries, recursion operators for symmetries, and 
parameter-dependent zero-curvature representations belong 
to the main tools in the theory of integrable PDEs, 
but the interrelations between these structures are not fully understood. 

Consider a PDE system with independent variables $x_1,\dots,x_n$ 
and dependent variables $u^1,\dots,u^{\nv}$ 
\beq
\lb{intf0}
F^p\big(x_i,u^j,u^j_{x_a},
u^j_{x_{a}x_{b}},\dots\big)=0,\qquad 
u^j=u^j(x_1,\dots,x_n),\qquad  
j=1,\dots,\nv,\qquad  
p=1,\dots,s.
\ee
Here $u^j_{x_a},\,u^j_{x_{a}x_{b}},\dots$ denote the partial derivatives of $u^j$. 
Suppose that one has a zero-curvature representation (ZCR) 
\beq
\lb{intzcr}
\bA^{k}_{x_l}-\bA^{l}_{x_k}+[{\bA}^{k},{\bA}^{l}]=0,\qquad 
\bA^{k}=\bA^{k}(\la,x_i,u^j,u^j_{x_a},\,u^j_{x_{a}x_{b}},\dots),
\qquad k,l=1,\dots,n,
\ee
where $\la$ is a parameter, $\bA^{k}$ take values in a matrix Lie algebra, and 
equations~\eqref{intzcr} hold as a consequence of~\eqref{intf0}.

Equations~\eqref{intzcr} are equivalent to the compatibility of 
the corresponding auxiliary linear system $\Psi_{x_k}=\bA^{k}\Psi$, $k=1,\dots,n$.  
Such a ZCR often helps to establish integrability 
for system~\eqref{intf0} in the framework of the inverse scattering method.

The present paper concerns a particular case of the following problem. 
If a parameter-dependent matrix-valued ZCR~\eqref{intzcr} is given for system~\eqref{intf0}, 
how to construct recursion operators and symmetries for~\eqref{intf0}?

We include in this question local and nonlocal symmetries. 
As is well known, nonlocal symmetries often 
generate infinite-dimensional (non-abelian) Lie algebras, 
which reflect the algebraic structures behind the integrability properties 
of the considered PDEs. 

Before describing the results of the paper, 
we would like to recall some facts on ZCRs and recursion operators. 
Note that equations~\eqref{intf0},~\eqref{intzcr} 
may have any number $n\ge 2$ of independent variables $x_1,\dots,x_n$.
According to~\cite{marvan-zcr92}, 
if one has a nontrivial ZCR~\eqref{intzcr} with $n\ge 3$  
then system~\eqref{intf0} must be overdetermined in a certain sense. 

In the case $n=2$, for $(1+1)$-dimensional PDEs with independent variables $x$ and $t$, 
a typical recursion operator is a pseudodifferential operator 
mapping symmetries to (possibly nonlocal) symmetries, 
thus capable of generating infinite series of them 
(see, e.g.,~\cite{akns,Ol,fokas87,olv_eng2,K,kerst-kras,gerdjikov}). 
Such a pseudodifferential recursion operator is usually written  
as a linear combination of products of terms of the form 
$f$, $\,\,D_x$, $\,\,D_x^{-1}$.  
Here $D_x$ is the total derivative in~$x$, 
and $f$ is a (matrix) function on the corresponding jet bundle. 

For example, recall that, for each nonnegative integer $k$, 
the $(2k+1)$th order flow in the KdV hierarchy can be written as 
$u_{t_{2k+1}}=\pd_x^{2k+1}u+g_{2k+1}(u,u_x,\dots,\pd_x^{2k-1}u)$ for some function $g_{2k+1}$. 
In particular, $u_{t_1}=u_x$ and $u_{t_3}=u_{xxx}+uu_x$. 
These flows are symmetries of the KdV equation $u_t=u_{xxx}+uu_x$, 
where $t$ can be identified with $t_3$. 
The classical recursion operator $D_x^2+\frac23u+\frac13u_xD_x^{-1}$ of KdV maps 
the $(2k+1)$th order flow to the $(2k+3)$th order flow. So the whole hierarchy can be generated 
by applying this operator repeatedly to~$u_x$.

For a wide class of ZCRs with $n=2$,   
there are a variety of methods to construct a recursion operator from a ZCR   
as a pseudodifferential operator of this type  
(see, e.g., \cite{akns,G-K-S,K,sakovich-zcr} and references therein). 
It seems that this approach does not immediately extend to the case of arbitrary~$n$. 
Furthermore, since pseudodifferential recursion operators 
are defined as formal expressions containing $D_x^{-1}$, 
sometimes it is not clear how to apply such an operator to a given symmetry~\cite{G,S-W}, 
because $D_x^{-1}(h)$ is not always well defined for functions $h$ on jet bundles. 

A different method to construct recursion operators from ZCRs~\eqref{intzcr} with arbitrary~$n$ 
was suggested in~\cite{M-S,M-Einstein,M-P}, using some ideas of~\cite{G}. 
More precisely, in~\cite{M-S,M-Einstein} one considered the case $n=2$, 
and in~\cite{M-P} the method of~\cite{M-S,M-Einstein} was applied 
to an overdetermined PDE with any number $n\ge 3$ of independent variables.   

Recursion operators obtained by the method of~\cite{M-S,M-Einstein,M-P} are not pseudodifferential, 
but are B\"acklund autotransformations of a certain type.
Namely, as is well known, for a given system~\eqref{intf0}, one can write down the linearized 
system, whose solutions correspond to symmetries of~\eqref{intf0}.
Then the recursion operators from~\cite{M-S,M-Einstein,M-P} can be viewed as B\"acklund autotransformations of the linearized system.

It seems that Papachristou~\cite{Pap} was the first to interpret some recursion operators 
as B\"acklund autotransformations of linearized equations.  
This approach is widely applicable (Guthrie~\cite{G}, Marvan~\cite{M2}) 
and does not suffer from the lack of a rigorous definition 
of the action of pseudodifferential operators on some types of symmetries~\cite{G,M2}. 
If one regards a recursion operator as a B\"acklund autotransformation, 
then it can be applied to any symmetry, 
but the result may depend on pseudopotentials in the sense of~\cite{we_prol1975,G}.
Also, in this approach, the inverse of a recursion operator 
can be computed effectively in many cases~\cite{G,M2,M3,M-S,M-Einstein,M-P}.  

The main idea of~\cite{M-S,M-Einstein,M-P} 
is to search for a recursion operator depending on a matrix pseudopotential $\bW$ 
defined by the compatible equations $\bW_{x_k}=[{\bA}^{k}, {\bW}] + \lin_{\bU} \bA^{k}$, 
$k=1,\dots,n$, where ${\bA}^{k}$ are the matrices from a ZCR~\eqref{intzcr} 
and $\lin_{\bU}$ is the evolutionary vector field corresponding 
to a symmetry~$\bU$ (see Section~\ref{sec-prelim} for more details).
Considered examples~\cite{M3,M-S,M-Einstein,M-P} show that very often such a 
recursion operator can be found relatively easily, 
but it generates mainly nonlocal symmetries. 
However, by inverting it, one obtains another recursion operator, 
which often generates hierarchies of local symmetries. 

Note that a recursion operator does not immediately give 
an infinite-dimensional Lie algebra of (possibly nonlocal) symmetries. 
In general, in the terminology of~\cite{vinbook,kerst-kras,K-V2,M2}, 
a recursion operator produces nonlocal symmetry shadows, which may depend 
on some nonlocal variables called pseudopotentials~\cite{we_prol1975,G}.
In order to define the commutators of such symmetry shadows, 
one needs to specify a suitable action of them on the corresponding pseudopotentials, 
which is not always possible. 

In some cases this is possible, and then the obtained nonlocal symmetries generate 
a Lie algebra, which often turns out to be infinite-dimensional 
(see, e.g.,~\cite{guth-hick} for the KdV equation). 
Note that the problem of constructing Lie algebras of nonlocal symmetries was not 
considered in~\cite{M-S,M-Einstein,M-P}.
 
The general definitions of symmetries, symmetry shadows, 
and nonlocal symmetries for PDEs are recalled in 
Section~\ref{sec-prelim} of the present paper. 
Also, in Section~\ref{sec-prelim} we review the method of~\cite{M-S,M-Einstein,M-P} 
for constructing recursion operators from ZCRs. 

\begin{remark}
According to~\cite{marvan-zcr92}, 
existence of a nontrivial matrix-valued ZCR~\eqref{intzcr} in the case $n\ge 3$ 
imposes strong conditions on the possible form of system~\eqref{intf0}. 
For example, 
the Kadomtsev-Petviashvili (KP) equation 
$(u_t+u_{xxx}+6uu_x)_x=\pm u_{yy}$ 
does not possess any nontrivial matrix-valued ZCRs, 
so the approach of~\cite{M-S,M-Einstein,M-P} and the present paper 
cannot produce any recursion operators for the KP equation.

\end{remark}

In Sections~\ref{secinvrec}--\ref{secnonlsym}, 
we study the above-mentioned topics for the Darboux--Egoroff (DE) system
\beq
\lb{intdesys}
\bet_{ij}=\bet_{ij}(x_1,\dots,x_n),\qquad \bet_{ij}=\bet_{ji},\qquad 
i,j=1,\dots,n,\qquad i\neq j,\qquad n\ge 3,\\
\frac{\pd}{\pd x_k} \bet_{ij}=\bet_{ik} \bet_{kj},\qquad i\neq j\neq k\neq i,\quad\qquad
\sum_{s=1}^n \frac{\pd}{\pd x_s}\bet_{ij} = 0.
\ee
Equations~\eqref{intdesys} originate from a classical problem 
of describing flat diagonal metrics of Egoroff type~\cite{Dar,Ego} 
(see also~\cite{dubrovin90,tsarev,zakhduke} for a modern exposition). 
The DE system~\eqref{intdesys} has attracted a lot of attention in the last two decades, 
because, according to the results of Dubrovin~\cite{dubrovin92,dubrovin96}, 
it plays an essential role in the classification problem 
for massive topological field theories and a related theory of Frobenius manifolds, 
which have found remarkable applications in various areas~\cite{dubrovin92,dubrovin96,manin-book}. 

Let $\bb$ be the symmetric $n\times n$ matrix with the entries $\bet_{ij}$, where $\bet_{ii}=0$. 
For $k=1,\dots,n$, we denote by $\be{k}$ the $n \times n$ matrix whose entries are zero 
except for one $1$ occupying the $k$th position on the diagonal. 

It is well known that system~\eqref{intdesys} possesses the following ZCR
\beq
\lb{zcrdeint}
\bA^{k}_{x_l}-\bA^{l}_{x_k}+[{\bA}^{k},{\bA}^{l}]=0,\qquad 
\bA^{k}=\lambda\be{k}+[\bb,\be{k}],\qquad k,l=1,\dots,n,
\ee
where $\lambda$ is a parameter. 
As usual, equations~\eqref{zcrdeint} imply that the auxiliary linear system
\beq
\lb{alsint}
\Psi_{x_k} = \bA^{k} \Psi=(\lambda \be{k} + [\bb,\be{k}])\Psi, \qquad \det \Psi \neq 0,
\qquad k=1,\ldots,n,
\ee
for an $n\times n$ matrix-function $\Psi$ is compatible.

The DE system~\eqref{intdesys} is known to be integrable by the inverse scattering, 
dressing, and algebro-geometric methods
(see, e.g.,~\cite{dubrovin90,krich,tsarev,zakhduke,dubrovin96,vdlmar,vdltw,aglz}), 
but we could not find in the literature any recursion operators for it. 
One of the goals of the present paper is to construct a recursion operator for~\eqref{intdesys} 
and to describe some symmetries generated by this operator. 

More precisely, we construct two recursion operators $\mathfrak{R}$ and~$\hat{\mathfrak{R}}$, 
which are inverse to each other in the sense of~\cite{G,M2,M3}. 
Note that $\mathfrak{R}$, $\hat{\mathfrak{R}}$ differ considerably from 
classical pseudodifferential recursion operators of $(1+1)$-dimensional PDEs.
In accordance with the approach of~\cite{M-S,M-Einstein,M-P}, 
the operators $\mathfrak{R}$, $\hat{\mathfrak{R}}$ 
are not pseudodifferential, but are B\"acklund autotransformations 
for the linearized DE system, whose solutions correspond to symmetries of~\eqref{intdesys}. 

A B\"acklund autotransformation for the DE system~\eqref{intdesys} is known 
(see~\cite{tsarev} and references therein).
We construct the recursion operators $\mathfrak{R}$, $\hat{\mathfrak{R}}$ 
as B\"acklund autotransformations for the \emph{linearized} DE system, which is a very different result.  
To avoid confusion in the terminology, 
we remark that the linearized DE system 
(given by equations~\eqref{Bij},~\eqref{lDE} in Section~\ref{secinvrec}) 
is not related to the auxiliary linear system~\eqref{alsint}.  

The constructed operators $\mathfrak{R}$, $\hat{\mathfrak{R}}$ 
generate local and nonlocal symmetries for system~\eqref{intdesys}. 
The structure of these symmetries is discussed below. 

We obtain these results by means of rather general methods, 
using only the ZCR~\eqref{zcrdeint}.
(We use also the corresponding auxiliary linear system~\eqref{alsint}, 
but it follows immediately from the ZCR.) 
Another goal of the paper is to demonstrate a technique for obtaining such results.
We expect that this technique can be applied successfully to many more PDEs possessing 
parameter-dependent matrix-valued ZCRs. 

The construction consists of three parts. 
\begin{enumerate}
\item 
Using the method of~\cite{M-S,M-Einstein,M-P} for constructing recursion operators from ZCRs, 
in Section~\ref{secinvrec} we obtain a recursion operator~$\mathfrak{R}$ 
for system~\eqref{intdesys}. 
Since $\mathfrak{R}$ generates mainly nonlocal symmetries, 
we call it the \emph{inverse recursion operator} for the DE system~\eqref{intdesys}. 

The operator $\mathfrak{R}$ is closely related to the 
inverse recursion operator~\cite{M-P} of the intrinsic generalized sine-Gordon (IGSG) 
equation~\cite{Am,T-T,CLT,M-P} 
in $n$ independent variables, because system~\eqref{intdesys} 
can be regarded as a certain reduction of the IGSG equation 
(see Section~\ref{secinvrec} for more details). 

\item Using Guthrie's approach~\cite{G} to inverting recursion operators 
by means of pseudopotentials, 
in Section~\ref{secdirect} we invert the operator $\mathfrak{R}$ and 
obtain what we call the \emph{direct recursion operator} $\hat{\mathfrak{R}}$.

Starting from the zero symmetry and applying the operator $\hat{\mathfrak{R}}$ repeatedly, 
one gets higher symmetries for system~\eqref{intdesys}. 
The obtained third and fifth order symmetries are local.
In the main nonlinear part of these symmetries of the DE system~\eqref{intdesys},  
we find the third and fifth order flows of an  
$(n-1)$-component vector modified KdV hierarchy from~\cite{A-F,An,B-M}. 
Note that we do not prove locality  
for higher order symmetries generated by~$\hat{\mathfrak{R}}$. 

Applying $\hat{\mathfrak{R}}$ to the scaling symmetry of system~\eqref{intdesys}, 
one obtains some nonlocal 
symmetries\footnote{In the terminology of~\cite{vinbook,kerst-kras,K-V2,M2},  
these are nonlocal symmetry shadows, but in the literature on recursion operators such 
objects are usually called nonlocal symmetries.}, 
which are briefly discussed in the end of Section~\ref{secdirect}.

Alternatively, one can obtain the operator $\hat{\mathfrak{R}}$ as a reduction 
of the direct recursion operator of the IGSG equation~\cite{M-P},
using the above-mentioned fact that system~\eqref{intdesys} 
can be regarded as a reduction of IGSG.

\item 
Let $\Psi$ satisfy~\eqref{alsint}. 
Introducing a formal Taylor series expansion $\Psi(\la)=\sum_{i=0}^\infty\la^i \Psi_i$ 
for $\Psi$ in $\lambda$ and using~\eqref{alsint}, 
one gets an infinite set of pseudopotentials (nonlocal variables) for system~\eqref{intdesys}. 
In other words, one obtains an infinite-dimensional covering of~\eqref{intdesys} 
in the sense of~\cite{vinbook,K-V2}.
Pseudopotentials arising from such formal Taylor series expansions are often used in 
the study of nonlocal symmetries for integrable PDEs 
(cf., e.g.,~\cite{guth-hick} for the KdV equation).

Applying the inverse recursion operator $\mathfrak{R}$ to the zero symmetry, we obtain 
nonlocal symmetries depending on these pseudopotentials. 
More precisely, in the terminology of~\cite{vinbook,K-V2,M2}, 
application of~$\mathfrak{R}$ to the zero symmetry gives nonlocal symmetry shadows. 
After imposing a certain algebraic constraint on~$\Psi(\la)$,  
which is suggested by the structure of the ZCR, 
we obtain nonlocal symmetries from these shadows 
by specifying a suitable action of them on the pseudopotentials.

As is shown in Section~\ref{secnonlsym}, 
the constructed nonlocal symmetries generate an infinite-dimensional Lie algebra, 
which is isomorphic to the negative part of a twisted loop algebra of~$\gl_n$ divided by its center. 
To prove this isomorphism, we use some results of Buryak and Shadrin~\cite{bur-shadr}, 
who considered essentially the same nonlocal symmetries in the context 
of the Givental--van de Leur twisted loop group action on the space of semisimple Frobenius manifolds
(see Remark~\ref{remburshad} below).

The above-mentioned constraint on~$\Psi(\la)$ reads $\Psi(\la)\cdot\Psi^\top(-\la)=\mathrm{Id}$, 
where $\top$ denotes matrix transposition.
This constraint was used previously in~\cite{vdltw}. 

\end{enumerate}

\begin{remark}
It is known that the Darboux--Egoroff system~\eqref{intdesys} 
can be obtained by a certain reduction of a multi-KP hierarchy~\cite{vdltw,vdlmar,aglz}. 
To our knowledge, this fact does not give a description 
of all possible symmetries of system~\eqref{intdesys}, especially if one considers also nonlocal symmetries.
Indeed, applying a reduction to a PDE, one may get some extra symmetries, 
which are difficult to predict in advance. 

In particular, we do not see any straightforward way to deduce the structure of recursion operators for 
symmetries of~\eqref{intdesys} from known relations of system~\eqref{intdesys} with multi-KP hierarchies. 
As has been discussed above, we construct the recursion operators $\mathfrak{R}$ and $\hat{\mathfrak{R}}$ 
by means of a different approach.

\end{remark}

\begin{remark}
\lb{remburshad}
Let $\twl$ be the infinite-dimensional Lie algebra of Laurent polynomials of the form 
$$
\sum_{i=-p}^q\la^i\br_i,\qquad p,q\in\zp,\qquad 
\br_i^\top=(-1)^{i+1}\br_i,
$$
where $\br_i$ belong to the Lie algebra $\gl_n$ of $n\times n$ matrices. 

Then $\twl$ can be identified with the twisted loop algebra of $\gl_n$ corresponding to the automorphism 
$$
\sigma\colon\gl_n\to\gl_n,\quad\qquad\sigma(\br)=-\br^\top.
$$ 
The algebra $\twl$, as a vector space, is equal to the direct sum of the subalgebras 
$$
\twl_{+}=\left.\{\sum_{i=0}^q\la^i\br_i\ \ \right|\ \ q\ge 0,\ \ \ \br_i^\top=(-1)^{i+1}\br_i\big\},\qquad\quad 
\twl_{-}=\left.\big\{\sum_{i=-p}^{-1}\la^i\br_i\ \ \right|\ \ p>0,\ \ \ \br_i^\top=(-1)^{i+1}\br_i\big\}.
$$

Using the infinitesimal version of the Givental--van de Leur twisted loop group action on the space of semisimple Frobenius manifolds~\cite{fvdls,givental1,givental2,lee1,lee2,vdltw}, 
in the framework of multi-KP hierarchies and Sato's semi-infinite Grassmannian, 
Buryak and Shadrin~\cite{bur-shadr} presented an action of the Lie algebra~$\twl$ 
by nonlocal symmetries of the Darboux--Egoroff system~\eqref{intdesys}. 
(In~\cite{bur-shadr} these symmetries are called infinitesimal deformations 
of solutions of the Darboux--Egoroff system.)

This action of~$\twl$ is described also in Section~\ref{secnonlsym} of the present paper. 
Since the subalgebra $\twl_{+}$ acts by obvious gauge symmetries, 
the interesting part of the action of~$\twl$ is concentrated in the subalgebra~$\twl_{-}$. 
In our construction, nonlocal symmetries generating the action of $\twl_{-}$ 
are obtained by means of a relatively simple procedure, 
which involves application of the inverse recursion operator~$\mathfrak{R}$ to the zero symmetry. 
In particular, we do not use multi-KP hierarchies and Grassmannians.  

The kernel of the action of~$\twl_{-}$ is equal to the center of~$\twl_{-}$, 
so the corresponding algebra of nonlocal symmetries is isomorphic to the algebra~$\twl_{-}$ 
divided by its center (see Section~\ref{secnonlsym} for details).

\end{remark}

\section{Conventions and notation}
\lb{convnot}

Let $\fik$ be either $\Com$ or $\mathbb{R}$. 
In what follows, unless otherwise specified, 
variables and functions take values in~$\fik$. 
Also, the entries of all considered matrices belong to~$\fik$. 
Functions are assumed to be smooth if $\fik=\mathbb{R}$ and analytic if $\fik=\Com$. 

The symbols $\zsp$ and $\zp$ denote the sets of positive and nonnegative 
integers respectively.  
When we write $\nv\in\zpin$, we mean that either $\nv\in\zp$ or $\nv=\infty$.

We shall often use the {\it diagonalization operator} $\diag$, 
which sets all off-diagonal elements of a square matrix to zero.
Otherwise said, if $\bP$ is a square matrix 
then $\diag\bP$ is the diagonal matrix that possesses the same diagonal as $\bP$.

Also, we define the operator $\nd$ as follows $\nd\bP=\bP-\diag\bP$. 
That is, the off-diagonal elements of the matrix $\nd\bP$ are equal to the 
corresponding elements of~$\bP$, and the diagonal of~$\nd\bP$ is zero. 
The notation $\nd$ is taken from~\cite{bur-shadr}. 
It is easy to check that
$$
\numbered\label{diag}
[\diag\bP,\diag\bQ] = 0,\qquad\quad\diag[\diag\bP, \bQ] = 0 
$$
for any square matrices $\bP$, $\bQ$ of the same size. 

\section{Preliminaries on symmetries and recursion operators}
\lb{sec-prelim}

In this section we review some basic notions from 
the theory of symmetries~\cite{olv_eng2,vinbook} and 
nonlocal symmetries~\cite{K-V1,K-V2,vinbook} for PDEs. 
Also, in the second half of the section we recall some results of~\cite{M-S,M-Einstein,M-P} 
on relations between zero-curvature representations and recursion operators for symmetries. 

Let $n\in\zsp$ and $\nv,s\in\zpin$. 
Consider a PDE system with independent variables $x_1,\dots,x_n$ 
and dependent variables $u^1,\dots,u^{\nv}$ 
\beq
\lb{prfxu}
F^\al\big(x_i,u^j_I)=0,\quad\qquad
\al=1,\ldots,s.
\ee
Here $I=(i_1,\ldots,i_k)$ ranges over unordered $k$-tuples of integers 
$i_1,\ldots,i_k\in\{1,\dots,n\}$ 
for all $k\ge 0$, and 
$$
u^j_I=\frac{\pd^{k}u^j}{\pd x_{i_1}\ldots\pd x_{i_k}},\qquad\quad j=1,\dots,\nv.
$$ 
In particular, for $I=\varnothing$ one has $u^j_{\varnothing}=u^j$.

As usual in the formal theory of PDEs~\cite{olv_eng2,vinbook}, 
$x_i$ and $u^j_I$ are regarded as independent
quantities and can be viewed as coordinates on an abstract infinite-dimensional space, 
which is called the \emph{infinite jet space}.
When we consider a scalar function $G(x_i,u^j_I)$, we always assume that it may depend only 
on a finite number of these coordinates. 
In particular,
for each $\al$ the function $F^\al\big(x_i,u^j_I)$ in~\eqref{prfxu} 
depends only on a finite number of $x_i$, $u^j_I$.

The \emph{total derivative operator with respect to} $x_i$ is given by the formula
$$
D_{x_i} = \frac{\pd}{\pd x_i}
 + \sum_{j=1}^{\nv} \sum_{|I|=0}^{\infty}u^j_{Ii}\frac{\pd}{\pd u^j_{I}}.
$$
For $I=(i_1,\ldots,i_k)$, we denote $D_I=D_{x_{i_1}}\cdots D_{x_{i_k}}$. 
Then $D_I(F^\al)=0$ are differential consequences of system~\eqref{prfxu}. 

In what follows, when we write ``$G=0$ modulo~\eqref{prfxu}''  
for some (vector) function $G$, we mean that the equation $G=0$ is valid modulo 
differential consequences of system~\eqref{prfxu}. 

A \emph{symmetry} of system~\eqref{prfxu} is given by an $\nv$-component
vector-function 
\beq
\lb{locsym}
\bU=(U^1,\ldots,U^\nv),\quad\qquad U^p=U^p(x_i,u^j_{I}),\qquad p=1,\ldots,\nv,
\ee
such that the following property holds. 
If $u^1,\ldots,u^\nv$ obey~\eqref{prfxu} then the 
infinitesimal deformation $\tilde u^p=u^p+\ve U^p$, $p=1,\ldots,\nv$,  
satisfies equations~\eqref{prfxu} up to $O(\varepsilon^2)$. 

More precisely, 
this condition means that the functions $U^p$ obey the following equations
\beq
\lb{prlineq}
\sum_{p=1}^{\nv}\sum_{|I|=0}^{\infty}
D_{I}(U^p)\frac{\pd F^\al}{\pd u^p_{I}}=0
\quad\text{modulo~\eqref{prfxu}}\qquad\forall\,\al=1,\ldots,s.
\ee
Equations~\eqref{prlineq} are obtained by substituting $\tilde u^p=u^p+\ve U^p$ 
in place of $u^p$ in~\eqref{prfxu} and collecting the terms linear in~$\ve$. 
So~\eqref{locsym} is a symmetry of~\eqref{prfxu} 
iff equations~\eqref{prlineq} are valid modulo differential consequences of~\eqref{prfxu}. 
Note that~\eqref{prlineq} is sometimes called the \emph{linearized system}  
corresponding to~\eqref{prfxu}.

Let 
\beq
\lb{evvf}
\lin_{\bU}=
\sum_{p=1}^{\nv}\sum_{|I|=0}^{\infty}D_{I}(U^p)\frac{\pd}{\pd u^p_{I}}.
\ee
The operator $\lin_{\bU}$ is called 
the \emph{evolutionary vector field} 
(or the \emph{linearization operator}) corresponding to $\bU$. 
Then equations~\eqref{prlineq} can be written as 
$\lin_{\bU}(F^\al)=0$ modulo~\eqref{prfxu} for $\al=1,\dots,s$.

Symmetries of a given system~\eqref{prfxu} form a Lie algebra, 
where the Lie bracket of $\bU$ and $\bU'$ is defined as follows 
$[\bU,\bU']=\lin_{\bU}(\bU')-\lin_{\bU'}(\bU)$. 
Symmetries~\eqref{locsym} are sometimes called \emph{local symmetries} of~\eqref{prfxu}. 
(In contrast to nonlocal symmetries, which are discussed below.)

Let $N\in\zpin$. Consider a PDE system of the form 
\beq
\lb{covpde}
w^q_{x_k}=G^q_k(w^r,x_i,u^j_{I}),\qquad k=1,\dots,n,\qquad 
q,r=1,\dots,N, 
\ee
where $w^q$ are additional dependent variables and $w^q_{x_k}=\pd w^q/\pd x_k$.
Note that no derivatives of $w^q$ appear on the right-hand side of equations~\eqref{covpde}. 

When we consider a scalar function $H(w^r,x_i,u^j_I)$, 
we assume that it may depend only on a finite number of $w^r$, $x_i$, $u^j_I$. 
In particular, this assumption applies to the functions 
$G^q_k(w^r,x_i,u^j_{I})$ in~\eqref{covpde}.

Equations~\eqref{covpde} are said to be \emph{compatible modulo~\eqref{prfxu}} if 
\beq
\lb{cross}
\sum_r G^r_l\frac{\pd G^q_k}{\pd w^r}+
\frac{\pd G^q_k}{\pd x_l} + \sum_{j,I}u^j_{Il}\frac{\pd G^q_k}{\pd u^j_{I}}
=
\sum_r G^r_k\frac{\pd G^q_l}{\pd w^r}+\frac{\pd G^q_l}{\pd x_k}
+\sum_{j,I}u^j_{Ik}\frac{\pd G^q_l}{\pd u^j_{I}}
\quad\text{modulo~\eqref{prfxu}}\qquad\forall\,q,k,l.
\ee 
So~\eqref{cross} must hold modulo differential consequences of~\eqref{prfxu}. 
Condition~\eqref{cross} is the equation 
$\pd w^q_{x_k}/\pd x_l=\pd w^q_{x_l}/\pd x_k$, 
where we substitute $w^q_{x_k}=G^q_k(w^r,x_i,u^j_{I})$ and 
$w^q_{x_l}=G^q_l(w^r,x_i,u^j_{I})$ according to~\eqref{covpde}. 

In the terminology of~\cite{we_prol1975,G}, 
if equations~\eqref{covpde} are compatible modulo~\eqref{prfxu} then 
$w^q$ are \emph{pseudopotentials} for~\eqref{prfxu}. 

Sometimes one needs to impose additional constraints of the form 
\beq
\lb{constr}
C^a(w^r)=0,\qquad a=1,\dots,Q,\qquad Q\in\zpin,
\ee
where $C^a(w^r)$ depends on a finite number of the variables $w^r$, 
$r=1,\dots,N$.
We say that constraints~\eqref{constr} are \emph{compatible with~\eqref{covpde} modulo~\eqref{prfxu}} 
if 
\beq
\lb{cacomp}
\sum_r w^r_{x_k}\frac{\pd C^a}{\pd w^r}=
\sum_r G^r_k\frac{\pd C^a}{\pd w^r}=0\quad 
\text{modulo~\eqref{prfxu},~\eqref{constr}}\qquad\forall\,k,a.
\ee
We mean here that~\eqref{cacomp} holds modulo equations~\eqref{constr} and differential 
consequences of~\eqref{prfxu}. 
In~\eqref{cacomp} one uses the fact that $w^r_{x_k}=G^r_k$ in view of~\eqref{covpde}.

If equations~\eqref{covpde},~\eqref{constr} are compatible modulo~\eqref{prfxu}, 
then we say that~\eqref{covpde},~\eqref{constr} determine a \emph{covering} of~\eqref{prfxu} 
with pseudopotentials $w^q$. 
A geometric theory of coverings of PDEs can be found in~\cite{vinbook,K-V2}.

When a covering~\eqref{covpde},~\eqref{constr} is given, 
we extend the action of the total derivatives $D_{x_k}$ to functions of $w^q$ by the rule 
$D_{x_k}(w^q)=w^q_{x_k}=G^q_k$. 
That is, for a function $f=f(w^q,x_i,u^j_{I})$ one has 
\beq
\lb{dxkwq}
D_{x_k}(f)=\frac{\pd f}{\pd x_k} + \sum_{j,I} 
u^j_{Ik}\frac{\pd f}{\pd u^j_{I}}+\sum_q 
G^q_k(w^r,x_i,u^j_{I})\frac{\pd f}{\pd w^q}.
\ee

Then one can consider equations~\eqref{prlineq} in the case when $U^p$ may depend on $w^q$. 
In the terminology of~\cite{vinbook,K-V2}, 
a \emph{symmetry shadow} for~\eqref{prfxu} in the covering~\eqref{covpde},~\eqref{constr} 
is an $\nv$-component vector-function 
\beq
\lb{symsh}
\bU=(U^1,\ldots,U^\nv),\qquad U^p=U^p(w^r,x_i,u^j_{I}),\qquad p=1,\ldots,\nv,
\ee
satisfying~\eqref{prlineq}.  

By definition, a \emph{nonlocal symmetry} for~\eqref{prfxu} in the 
covering~\eqref{covpde},~\eqref{constr} 
is a symmetry of system~\eqref{prfxu},~\eqref{covpde},~\eqref{constr}.
That is, a nonlocal symmetry is given by functions 
\beq
\lb{upwq}
U^p=U^p(w^r,x_i,u^j_{I}),\qquad 
W^q=W^q(w^r,x_i,u^j_{I}),\qquad p=1,\ldots,\nv,\qquad 
q=1,\dots,N,
\ee
such that, if $u^p$ obey~\eqref{prfxu} and $w^q$ obey~\eqref{covpde},~\eqref{constr},   
then the infinitesimal deformation 
\beq
\lb{prdeform}
\tilde u^p=u^p+\ve U^p,\qquad \tilde w^q=w^q+\ve W^q,\qquad p=1,\ldots,\nv,\qquad 
q=1,\dots,N,
\ee
satisfies equations~\eqref{prfxu},~\eqref{covpde},~\eqref{constr} up to $O(\varepsilon^2)$. 
This means that $U^p$ obey~\eqref{prlineq}, and $U^p$, $W^q$ obey
\beq
\lb{lcovpde}
D_{x_k}(W^q)=\sum_r W^r\frac{\pd G^q_k}{\pd w^r}+
\sum_{p=1}^{\nv}\sum_{|I|=0}^{\infty}D_{I}(U^p)\frac{\pd G^q_k}{\pd u^p_{I}}
\quad\text{modulo~\eqref{prfxu},~\eqref{constr}}\qquad\forall\,k,q,\\
\sum_r W^r\frac{\pd C^a}{\pd w^r}=0
\quad\text{modulo~\eqref{prfxu},~\eqref{constr}}\qquad\forall\,a.
\ee
Equations~\eqref{lcovpde} are the linearized version of~\eqref{covpde},~\eqref{constr}.

Note that if~\eqref{upwq} is a nonlocal symmetry then ${(U^1,\ldots,U^\nv)}$ 
is a symmetry shadow. 
A coordinate-independent definition of (nonlocal) symmetries 
can be found in~\cite{vinbook}.

Symmetries and symmetry shadows are often constructed 
by means of so-called recursion operators. 
As has been discussed in Section~\ref{sec-intr}, one can find in the literature 
a number of different approaches to this topic. 
Below we describe a method from~\cite{M-S,M-Einstein,M-P} 
for constructing recursion operators from zero-curvature representations.

Suppose that system~\eqref{prfxu} possesses a zero-curvature representation (ZCR)
\beq
\lb{zcrprel}
D_{x_l}(\bA^{k}) - D_{x_k}(\bA^{l}) + [{\bA}^{k},{\bA}^{l}] = 0
\quad\text{modulo~\eqref{prfxu}}\qquad\forall\,k,l=1,\ldots,n,
\ee
where $\bA^{k}=\bA^{k}(x_i,u^j_I)$, $k=1,\dots,n$, 
are functions with values in the algebra of $\sm\times\sm$ matrices for some $\sm\in\zsp$. 

Let $\bU=(U^1,\ldots,U^\nv)$ be a symmetry of system~\eqref{prfxu}.  
Formula~\eqref{evvf} implies $[\lin_{\bU},D_{x_i}]=0$ for all $i=1,\dots,n$. 
Recall that equations~\eqref{prlineq} can be written as 
$\lin_{\bU}(F^\al)=0$ modulo~\eqref{prfxu}.
Since $[\lin_{\bU},D_{x_i}]=0$, we get also 
$$
\lin_{\bU}(D_{x_{i_1}}\cdots D_{x_{i_r}}(F^\al))=
D_{x_{i_1}}\cdots D_{x_{i_r}}(\lin_{\bU}(F^\al))=0
\quad\text{modulo~\eqref{prfxu}}\qquad\forall\,i_1,\dots,i_r\in\{1,\dots,n\}. 
$$

Therefore, if an equation is valid modulo differential consequences of system~\eqref{prfxu}, 
we can apply the operator $\lin_{\bU}$ to this equation. 
Applying $\lin_{\bU}$ to~\eqref{zcrprel}, one gets 
\beq
\lb{linuzcr}
D_{x_l}(\lin_{\bU}\bA^{k}) - D_{x_k}(\lin_{\bU}\bA^{l}) + 
[\lin_{\bU}{\bA}^{k},{\bA}^{l}]+[{\bA}^{k},\lin_{\bU}{\bA}^{l}] = 0
\quad\text{modulo~\eqref{prfxu}}\,\ 
\quad\forall\,k,l,
\ee
where $\lin_{\bU}{\bA}^{k}$ is computed componentwise.  
Using~\eqref{zcrprel} and~\eqref{linuzcr}, it is easy to check that the system 
\begin{equation}
\label{Wgen}
\bW_{x_k} = [{\bA}^{k}, {\bW}] + \lin_{\bU} \bA^{k},\qquad\quad
k=1,\dots,n, 
\end{equation}
for a $\sm\times\sm$ matrix-function $\bW$ is compatible modulo~\eqref{prfxu}. 
Hence $\bW$ can be regarded as a matrix pseudopotential for~\eqref{prfxu}, 
and we can set $D_{x_k}(\bW)=[{\bA}^{k}, {\bW}] + \lin_{\bU} \bA^{k}$. 

Let $W_{ab}$ be the entries of the matrix $\bW$. 
Assume that we have found $\nv$ linear combinations 
\beq
\lb{tilup}
\tilde U^p=\sum_{a,b=1}^{\sm} c^{p,ab}W_{ab}+\sum_{r=1}^{\nv}\hat{c}^{p,r}U^r,\qquad p=1,\ldots,\nv,
\ee
of $W_{ab}$ and~$U^r$ so that the following property holds. 
If $\bU=(U^1,\ldots,U^\nv)$ satisfies~\eqref{prlineq} and the matrix $\bW$ 
obeys~\eqref{Wgen}, then $\tilde{\bU}=(\tilde{U}^1,\ldots,\tilde{U}^\nv)$ 
given by~\eqref{tilup} satisfies~\eqref{prlineq} as well.

We assume that this property holds for any $\bU=(U^1,\ldots,U^\nv)$ satisfying~\eqref{prlineq}. 
In particular, $\bU$ may depend on some pseudopotentials $w^q$, 
so $\bU$ is a symmetry shadow. 

The coefficients $c^{p,ab}$, $\hat{c}^{p,r}$ in~\eqref{tilup} 
may depend on $x_i$, $u^j_{I}$ 
(see also Remark~\ref{remcoef} below for more general possibilities). 
If the ZCR~\eqref{zcrprel} depends on a parameter~$\la$,
then $c^{p,ab}$, $\hat{c}^{p,r}$ may also depend on $\la$. 

Thus, if $\bU=(U^1,\ldots,U^\nv)$ is a symmetry shadow and the matrix $\bW$ 
obeys~\eqref{Wgen}, then $\tilde{\bU}=(\tilde{U}^1,\ldots,\tilde{U}^\nv)$ 
given by~\eqref{tilup} is a symmetry shadow as well. 
According to~\eqref{tilup}, if $\bU$ depends on $x_i$, $u^j_{I}$, and 
some pseudopotentials $w^q$,  
then $\tilde{\bU}$ may depend on $x_i$, $u^j_{I}$, and the pseudopotentials $w^q$, $W_{ab}$.

Then the correspondence $\bU \mapsto \tilde{\bU}$ 
is a recursion operator in Guthrie's sense~\cite{G} for system~\eqref{prfxu}.

The correspondence $\bU \mapsto \tilde{\bU}$ 
can be viewed as a B\"acklund autotransformation 
for the linearized system~\eqref{prlineq}. 
Indeed, we take a solution $\bU$ of~\eqref{prlineq} and construct a new solution $\tilde{\bU}$  
by means of pseudopotentials~\eqref{Wgen}, so this is similar to classical 
B\"acklund autotransformations. 
See~\cite{M2,M3} for the general geometric theory of recursion 
operators as B\"acklund autotransformations of linearized equations. 

In the studied examples~\cite{M-S,M3,M-Einstein,M-P}, 
the right-hand side of~\eqref{tilup} tends to be a simple expression. 

The examples from~\cite{M-S,M3,M-Einstein,M-P} 
show that usually a recursion operator $\mathfrak{R}$ constructed in this way 
generates symmetry shadows depending nontrivially 
on pseudopotentials (so these shadows are not local symmetries). 
Often, in order to obtain local symmetries~\eqref{locsym}, 
one needs to invert $\mathfrak{R}$ by means of a procedure described in~\cite{G,M3}.
We demonstrate this procedure in Section~\ref{secdirect} 
in the case of the Darboux--Egoroff system. 

\begin{remark}
\lb{remcoef}
The described setting is sufficient for the present paper 
and the examples studied in~\cite{M-S,M-Einstein,M-P}. 
In principle, one can consider also more general recursion operators such that  
the coefficients $c^{p,ab}$, $\hat{c}^{p,r}$ in~\eqref{tilup} may depend on some pseudopotentials. 
Moreover, one may add to the right-hand side of~\eqref{tilup} 
terms of the form $D_{I}(U^r)$ with some coefficients, but usually this is not necessary.
\end{remark}

\begin{remark}
\lb{remwpsi} 
For any symmetry $\bU$, equations~\er{zcrprel} imply that the following system 
\beq
\label{psibv}
\Psi_{x_k} = \bA^{k} \Psi, \qquad \det \Psi \neq 0,\qquad
\bV_{x_k} = \Psi^{-1}(\lin_{\bU} \bA^{k})\Psi,
\qquad k=1,\ldots,n,
\ee
for $n\times n$ matrix-functions $\Psi$, $\bV$ is compatible.

If $\Psi$, $\bV$ obey~\er{psibv} then $\bW = \Psi \bV \Psi^{-1}$ satisfies~\eqref{Wgen}. 
Indeed, for $\bW = \Psi \bV \Psi^{-1}$ we have 
$$
\bW_{x_k} = (\Psi \bV \Psi^{-1})_{x_k}  
 = \Psi_{x_k} \bV \Psi^{-1}
   + \Psi \bV_{x_k} \Psi^{-1}
   - \Psi \bV \Psi^{-1} \Psi_{x_k} \Psi^{-1}
\\\quad
 = \Psi_{x_k} \Psi^{-1} \bW 
   + \lin_{\bU} \bA^{k}
   - \bW \Psi_{x_k} \Psi^{-1}
 = \bA^{k} \bW 
   + \lin_{\bU} \bA^{k}
   - \bW \bA^{k}=[{\bA}^{k}, {\bW}] + \lin_{\bU} \bA^{k}.
$$ 

Therefore, one can use the entries $W_{ab}$ of the matrix $\bW = \Psi \bV \Psi^{-1}$ 
in the recursion operator~\er{tilup}. 
An example of this construction is given in Proposition~\ref{prop:iROtriv} 
in Section~\ref{secinvrec}.
\end{remark}

\begin{remark}
If the functions $G^q_k$ in~\eqref{covpde} do not depend on $w^r$, 
then the pseudopotentials $w^q$ are sometimes called \emph{potentials}.
\end{remark}

\section{The inverse recursion operator}
\lb{secinvrec}

Using the method of~\cite{M-S,M-Einstein,M-P} 
for constructing recursion operators from zero-curvature representations (ZCRs), 
in the present section we shall obtain a recursion operator for the Darboux--Egoroff system. 
Also, we shall describe some (nonlocal) symmetry shadows generated by this operator. 

\begin{remark}
\lb{remdeigsg}

Using the above-mentioned method, Marvan and Pobo\v{r}il~\cite{M-P} 
constructed a recursion operator for the 
intrinsic generalized sine-Gordon (IGSG) equation~\cite{Am,T-T,CLT,M-P}. 
In principle, a recursion operator for the Darboux--Egoroff system 
can be deduced from that for IGSG, 
because the Darboux--Egoroff system is a special reduction of IGSG. 

Indeed, in~\cite{M-P} the IGSG equation is written as a system of PDEs with independent variables 
$x^1,\dots,x^n$, dependent variables $h^{ij}$, $v^i$, $i,j=1,\dots,n$, and a parameter $K$. 
Assuming $K=0$ and $h^{ij}=h^{ji}$, 
the IGSG system decouples, and the part containing only the unknowns $h^{ij}$ 
becomes the Darboux--Egoroff system (if we replace $h^{ij}$ by $\bet_{ij}$ 
and $x^i$ by $x_i$). 
However, a recursion operator for Darboux--Egoroff is not immediately obvious 
from that for IGSG, because the original $2n \times 2n$ ZCR  
of the IGSG equation does not immediately reduce to the $n \times n$ 
ZCR~\eqref{zcrdeint} of the Darboux--Egoroff system.

In our opinion, it is more instructive to present the application 
of the above-mentioned method to the Darboux--Egoroff system in full detail, 
because this may help the readers to apply the method to other PDEs possessing ZCRs.

Furthermore, we shall describe the structure of some (nonlocal) symmetry shadows 
generated by the obtained recursion operator, and this will help us to construct 
nonlocal symmetries for the Darboux--Egoroff system in Section~\ref{secnonlsym}. 
Note that nonlocal symmetries and symmetry shadows were not studied in~\cite{M-P}. 

\end{remark}

In what follows, subscripts after a comma mean that we apply (total) derivatives 
with respect to the corresponding variables. 
For example, $\bet_{ij,k}=\pd \bet_{ij}/\pd x_k$.  
In this notation, the Darboux--Egoroff system~\eqref{intdesys} reads
\begin{equation}
\label{DE}
\bet_{ij,k} = \bet_{ik} \bet_{kj},\qquad i\neq j\neq k\neq i,\qquad i,j,k=1,\dots,n,\\
\sum_{s=1}^n\bet_{ij,s} = 0,\qquad \bet_{ij}=\bet_{ji}.
\end{equation}

Recall that we denote by $\bb$ 
the symmetric $n\times n$ matrix with the entries $\bet_{ij}$, where $\bet_{ii}=0$.
Then system~\eqref{DE} can be compactly written as
\begin{equation}
\label{DEm}
[\bb,\be{k}]_{,l} - [\bb,\be{l}]_{,k} + [[\bb,\be{k}], [\bb,\be{l}]] = 0,\qquad\quad 
k,l=1,\dots,n,
\end{equation}
where $\be{k}$ is the $n \times n$ matrix whose entries are zero 
except for one $1$ occupying the $k$th position on the diagonal. 
To show that~\eqref{DE} is equivalent to~\eqref{DEm}, one can use the fact that~\eqref{DE} implies 
\beq
\lb{bijjiji}
\bet_{ij,j}+\bet_{ij,i}+\sum_{s\neq i,j}\bet_{is} \bet_{sj}=0,
\qquad\quad i\neq j.
\ee

According to the general definition of symmetries from Section~\ref{sec-prelim}, 
a symmetry of system~\eqref{DE} is given by functions 
\beq
\lb{Bij}
B_{ij},\qquad i\neq j,\qquad B_{ij}=B_{ji},\qquad i,j=1,\dots,n,
\ee
such that if $\bet_{ij}$ satisfy~\eqref{DE} then $\tilde \bet_{ij}=\bet_{ij}+\ve B_{ij}$ 
obey~\eqref{DE} up to $O(\ve^2)$. 

This means that $B_{ij}$ satisfy the {\it linearized Darboux--Egoroff system}
\begin{equation}
\label{lDE}
B_{ij,k} = \bet_{ik} B_{kj} + \bet_{kj} B_{ik}, \qquad i\neq j\neq k\neq i,\qquad 
\sum_{s=1}^n B_{ij,s} = 0.
\end{equation}
Equations~\eqref{lDE} are obtained by substituting $\tilde \bet_{ij}=\bet_{ij}+\ve B_{ij}$ 
in place of $\bet_{ij}$ in~\eqref{DE} and collecting the terms linear in~$\ve$.

Let $\bB$ be the symmetric matrix with the entries $B_{ij}$, where $B_{ii}=0$. 
Then the linearized Darboux--Egoroff system~\eqref{lDE} can be written as 
\begin{equation}
\label{lDEm}
[\bB,\be{k}]_{,l} - [\bB,\be{l}]_{,k}
+[[\bB,\be{k}], [\bb,\be{l}]] + [[\bb,\be{k}], [\bB,\be{l}]] = 0,\qquad\quad 
k,l=1,\dots,n.
\end{equation}
Indeed, substituting $\tilde{\bb}=\bb+\ve\bB$ in place of $\bb$ in~\eqref{DEm}  
and collecting the terms linear in~$\ve$, one gets~\eqref{lDEm}.

According to the definitions of symmetries and symmetry shadows discussed 
in Section~\ref{sec-prelim}, for $\bB$ satisfying~\eqref{lDEm} we have the following. 
If $\bB$ depends on $x_i$, $\bet_{ij}$, and derivatives of $\bet_{ij}$, then $\bB$ is a symmetry. 
If $\bB$ depends also on some pseudopotentials, then $\bB$ is a symmetry shadow. 

However, 
in the literature on recursion operators, symmetry shadows are very often called symmetries as well. 
We shall also sometimes use this abuse of terminology, when it does not lead to a confusion. 
In particular, in the present section and in Section~\ref{secdirect} a symmetry of the 
Darboux--Egoroff system is any solution $\bB$ 
of the linearized Darboux--Egoroff system~\eqref{lDEm}, 
where $\bB$ may depend on pseudopotentials. 

The ZCR~\eqref{zcrdeint} of the Darboux--Egoroff system can be written as
\beq
\lb{zcr}
\bA^{k}_{,l}-\bA^{l}_{,k}+[{\bA}^{k},{\bA}^{l}]=0,\qquad 
\quad k,l=1,\dots,n, 
\ee
where $\bA^{k}=\lambda\be{k}+[\bb,\be{k}]$.  
It is easy to check that equations~\eqref{zcr} 
hold as a consequence of system~\eqref{DE} (or~\eqref{DEm}), irrespectively
of the value of the parameter $\lambda$. 

Analogously to~\eqref{Wgen}, we introduce an $n \times n$ matrix ${\bW} = (W_{ij})$ satisfying
\begin{equation}
\label{W}
\bW_{,k} = [{\bA}^{k}, {\bW}] + \lin_{\bB} \bA^{k},\qquad\quad k=1,\dots,n.
\end{equation}
Here $\lin_{\bB}$ is the corresponding evolutionary vector field 
(also called the linearization operator), which is defined similarly to~\eqref{evvf}. 
One has $\lin_{\bB}(\bet_{ij,k_1 \dots k_l})=B_{ij,k_1 \dots k_l}$, 
and $\lin_{\bB} \bA^{k}$ is computed componentwise. 
Compatibility of system \eqref{W} follows from the zero-curvature 
condition~\eqref{zcr}, as has been discussed in Section~\ref{sec-prelim} for system~\eqref{Wgen}. 

Since $\lin_{\bB}(\be{k})=0$ and $\lin_{\bB}(\bb)=\bB$, 
for $\bA^{k}=\lambda\be{k}+[\bb,\be{k}]$ we get $\lin_{\bB} \bA^{k} = [\bB,\be{k}]$. 
Hence~\eqref{W} is equivalent to
$$
\numbered\label{WH}
\bW_{,k} = [{\bA}^{k}, {\bW}] + [\bB,\be{k}],\qquad\quad k=1,\dots,n. 
$$
With $i,j,k$ denoting pairwise different indices, equations~\eqref{WH} read 
$$
\numbered\label{iROw}
W_{ii,i} = -\sum_s \bet_{is} (W_{is} + W_{si}), \qquad
W_{ii,k} = \bet_{ik} (W_{ik} + W_{ki}), \\
W_{ij,i} = \lambda W_{ij} + \bet_{ij} W_{ii} - \sum_s \bet_{is} W_{sj} - B_{ij}, \\
W_{ij,j} = -\lambda W_{ij} + \bet_{ij} W_{jj} - \sum_s \bet_{sj} W_{is} + B_{ij}, \qquad 
W_{ij,k} = \bet_{ik} W_{kj} + \bet_{kj} W_{ik}.
$$

\begin{proposition} 
\label{prop:iRO}
If\/ $\bB=(B_{ij})$ is a symmetry of the Darboux--Egoroff system and $W_{ij}$ 
satisfy~\eqref{iROw}, then $\tilde{\bB}=(\tilde{B}_{ij})$ given by $\tilde{B}_{ii}=0$,
\begin{equation}
\label{iRO}
\tilde{B}_{ij} = W_{ij} + W_{ji}, \qquad i\neq j,\qquad i,j=1,\dots,n,
\end{equation}
is a symmetry of the Darboux--Egoroff system as well.
\end{proposition}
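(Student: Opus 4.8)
The plan is to verify directly that the symmetric matrix $\tilde{\bB}$ with entries $\tilde{B}_{ij}=W_{ij}+W_{ji}$ (and $\tilde{B}_{ii}=0$) satisfies the linearized Darboux--Egoroff system~\eqref{lDE}, i.e.\ the two families of equations $\tilde{B}_{ij,k}=\bet_{ik}\tilde{B}_{kj}+\bet_{kj}\tilde{B}_{ik}$ for pairwise distinct $i,j,k$, and $\sum_{s}\tilde{B}_{ij,s}=0$. Rather than work with~\eqref{lDE} component-wise from the start, I would first try the compact matrix route: recall from Remark~\ref{remwpsi} that $\bW=\Psi\bV\Psi^{-1}$ with $\Psi$ obeying the auxiliary linear system~\eqref{alsint} and $\bV_{,k}=\Psi^{-1}[\bB,\be{k}]\Psi$, and note that $\bW$ is only determined up to adding $\Psi\bc\Psi^{-1}$ for a constant matrix $\bc$; this gauge freedom is harmless because $\tilde{\bB}=\nd(\bW+\bW^{\top})$ only uses off-diagonal entries and the relevant statement is that $\tilde\bB$ solves~\eqref{lDEm}. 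So the cleanest formulation of the claim is: the operator $\bW\mapsto \nd(\bW+\bW^{\top})$ sends solutions of~\eqref{W} to solutions of~\eqref{lDEm}.

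Concretely, I would compute $\bW_{,k}^{\top}$ from~\eqref{WH}. Since $\bA^{k}=\lambda\be{k}+[\bb,\be{k}]$ and $\be{k}$, $\bb$ are symmetric, we have $(\bA^{k})^{\top}=\lambda\be{k}-[\bb,\be{k}]$, and $[\bB,\be{k}]^{\top}=-[\bB,\be{k}]$. Hence
\begin{equation*}
(\bW+\bW^{\top})_{,k}=[\bA^{k},\bW]+[(\bA^{k})^{\top},\bW^{\top}]=\lambda[\be{k},\bW+\bW^{\top}]+[\,[\bb,\be{k}],\bW-\bW^{\top}\,].
\end{equation*}
This is the key identity: the $\lambda$-dependent term is $\lambda[\be{k},\bW+\bW^{\top}]$, which is purely off-diagonal in a controlled way, and crucially the symmetric combination's derivative is expressed through $\bW-\bW^{\top}$, whose off-diagonal part is itself controlled by~\eqref{WH}. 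Then I would write $\bW+\bW^{\top}=\diag(\bW+\bW^{\top})+\tilde\bB$ and $\bW-\bW^{\top}=\nd(\bW-\bW^{\top})$, substitute into the identity, and extract the off-diagonal part; using the bracket identities~\eqref{diag} and $\be{k}\diag(\cdot)=\diag(\cdot)\be{k}$, the diagonal of $\bW+\bW^{\top}$ drops out of $\nd$ of the right-hand side, leaving an expression for $\tilde{B}_{ij,k}$ in terms of $\bet$ and $\tilde{B}$ alone. This is precisely where the equations from~\eqref{iROw} for $W_{ij,k}$, $W_{ij,i}$, $W_{ij,j}$, $W_{ii,k}$ get used to cancel the $B_{ij}$ and $\lambda W_{ij}$ terms in symmetric pairs.

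For the $k\neq i,j$ equation this is essentially immediate from the last relation in~\eqref{iROw}: adding $W_{ij,k}=\bet_{ik}W_{kj}+\bet_{kj}W_{ik}$ to its transpose gives $\tilde{B}_{ij,k}=\bet_{ik}\tilde{B}_{kj}+\bet_{kj}\tilde{B}_{ik}$ directly. The remaining linearized equations come in two flavors: the ``diagonal-index'' cases, where I combine $W_{ij,i}$ with $W_{ji,i}$ (note $W_{ji,i}$ is an ``off-diagonal derivative along the second index'' type, so it is governed by the $W_{ij,j}$-formula with $i,j$ swapped), and here the $\pm\lambda W_{ij}$ and $\mp B_{ij}$ terms cancel pairwise, leaving only $\bet$-linear terms which I must massage — using $\bet_{ij}(W_{ii}+W_{jj})$ against the diagonal pieces and $\sum_s\bet_{is}W_{sj}$-type sums — into the required $\bet_{ij}\tilde{B}_{jj}$-free form; and the closing constraint $\sum_s \tilde{B}_{ij,s}=0$, which I would get by summing all the $W_{ij,k}$ relations over $k$ (including $k=i,j$), adding the transpose, and invoking $\sum_s\bet_{ij,s}=0$ together with the fact that $\sum_s[\bA^{s},\bW]$ telescopes against the already-verified pieces. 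The main obstacle I anticipate is the bookkeeping in the ``diagonal-index'' cases: one has to be careful that the inner sums $\sum_s\bet_{is}W_{sj}$ split correctly according to whether the summation index $s$ equals $i$, $j$, or neither, and that after symmetrization every term not of the form $\bet_{ik}\tilde B_{kj}+\bet_{kj}\tilde B_{ik}$ genuinely cancels; a clean way to avoid index gymnastics is to carry out the whole computation matricially via the boxed identity above and only pass to indices at the very end, which is the route I would take.
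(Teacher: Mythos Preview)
Your approach is essentially the paper's: pass to the symmetric and antisymmetric parts $\bS=\bW+\bW^{\top}$, $\bR=\bW^{\top}-\bW$, derive their evolution from~\eqref{WH}, and verify the matrix linearized system~\eqref{lDEm} for $\tilde\bB=\nd\bS$. However, your ``key identity'' contains a sign slip. Transposing $[\bA^{k},\bW]$ gives $[\bW^{\top},(\bA^{k})^{\top}]=-[(\bA^{k})^{\top},\bW^{\top}]$, not $+[(\bA^{k})^{\top},\bW^{\top}]$; consequently the correct identity (which is exactly the first equation of the paper's~\eqref{RS}) is
\[
(\bW+\bW^{\top})_{,k}=\lambda\,[\be{k},\,\bW-\bW^{\top}]+[[\bb,\be{k}],\,\bW+\bW^{\top}],
\]
with the roles of $\bW\pm\bW^{\top}$ swapped relative to what you wrote. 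This matters: with the correct version, $\bS_{,k}=[[\bb,\be{k}],\bS]$ up to a $\lambda$-term in $\bR$, and it is precisely this structure that lets you express $\tilde{B}_{ij,k}$ in terms of $\bet$ and $\tilde{\bB}$ alone (your own stated goal); with your version the non-$\lambda$ part is $[[\bb,\be{k}],\bR]$, so the off-diagonal piece would involve $\bR$-entries rather than $\tilde\bB$-entries, and the argument as you describe it would stall.

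Once the identity is fixed, the paper's route is also cleaner than the case-split you outline. Rather than treating $k\neq i,j$, $k\in\{i,j\}$, and the closing constraint $\sum_{s}\tilde B_{ij,s}=0$ separately, the paper substitutes $\bS_{,k}$ and $\bS_{,l}$ directly into the left-hand side $\bL$ of~\eqref{lDEm}, uses $[\diag\bS,\be{k}]=0$ to drop the diagonal part, and then observes that what remains vanishes identically by the Jacobi identity together with $[\be{k},\be{l}]=0$. That single matrix computation handles all of your component cases at once and avoids the index bookkeeping you (correctly) identify as the main obstacle.
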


That is, $\tilde{B}_{ij}$ for $i\neq j$ are off-diagonal components of the symmetric matrix 
$\bW + \bW^\top$.

Recall that in Section~\ref{sec-prelim} we have discussed recursion operators 
of the form~\eqref{tilup}. 
Similarly to~\eqref{tilup}, Proposition~\ref{prop:iRO} says that formula~\eqref{iRO} 
determines a recursion operator for the Darboux--Egoroff system. 

Recursion operators of this type are often called inverse, 
because they generate mainly nonlocal symmetries. 
So we say that~\eqref{iRO} is the \emph{inverse recursion operator} for the Darboux--Egoroff system 
and denote it by $\mathfrak{R}$.

To simplify the proof of Proposition~\ref{prop:iRO}, let us rewrite its statement in
terms of the symmetric and antisymmetric components
$$
\bS = \bW^\top + \bW, \quad\qquad \bR = \bW^\top - \bW,
$$
where ${}^\top$ denotes transposition. 
Substituting $\bA^{k}=\lambda\be{k}+[\bb,\be{k}]$ in~\eqref{WH}, one obtains 
$$
\bW_{,k} = [{\bA}^{k}, {\bW}] + [\bB,\be{k}] 
= \lambda [\be{k},\bW] + [[\bb,\be{k}], \bW] + [\bB,\be{k}].
$$
By transposition,
$\bW_{,k}^\top 
 = -\lambda [\be{k},\bW^\top] + [[\bb,\be{k}], \bW^\top] - [\bB,\be{k}]$. 
By addition and subtraction, equation~\eqref{WH} is equivalent to the system
$$
\numbered\label{RS}
\bS_{,k} = -\lambda [\be{k}, \bR] + [[\bb,\be{k}], \bS],\qquad\quad
\bR_{,k} = -\lambda [\be{k}, \bS] + [[\bb,\be{k}], \bR] - 2 [\bB,\be{k}].
$$
Hence the following equivalent form of Proposition~\ref{prop:iRO} in
matrix notation.

\begin{proposition} 
\label{prop:iROm}
Let\/ $\bB$ be a symmetry of the Darboux--Egoroff system.
Let\/ $\bR$ be an antisymmetric, $\bS$ a symmetric matrix satisfying system~\eqref{RS}.
Then the off-diagonal part
$$
\numbered\label{H'}
\tilde{\bB} = \bS - \diag \bS
$$
of\/ $\bS$ is a symmetry of the Darboux--Egoroff system as well. 
Here $\diag$ is the diagonalization operator defined in Section~\textup{\ref{convnot}}. 
\end{proposition}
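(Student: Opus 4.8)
The plan is to verify directly that $\tilde{\bB}=\bS-\diag\bS$ solves the linearized Darboux--Egoroff system in the matrix form~\eqref{lDEm}; by the convention adopted in this section, this is precisely the assertion that $\tilde{\bB}$ is a symmetry. The crux is the elementary remark that $\diag\bS$ and $\be{k}$ are both diagonal, so the first identity in~\eqref{diag} gives $[\diag\bS,\be{k}]=0$ and hence $[\tilde{\bB},\be{k}]=[\bS,\be{k}]$ for every $k$. Since the unknown matrix enters~\eqref{lDEm} only through commutators with the $\be{k}$, it suffices to show that the \emph{full} symmetric matrix $\bS$ satisfies
$$
[\bS,\be{k}]_{,l}-[\bS,\be{l}]_{,k}+[[\bS,\be{k}],[\bb,\be{l}]]+[[\bb,\be{k}],[\bS,\be{l}]]=0,\qquad k,l=1,\dots,n.
$$
Equivalently, this says that the matrices $[\bS,\be{k}]$ satisfy the equation obtained by linearizing the zero-curvature representation~\eqref{zcr}, whose part linear in $\lambda$ cancels automatically by the Jacobi identity and $[\be{k},\be{l}]=0$, leaving exactly the displayed identity.

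To establish the displayed identity I would differentiate and substitute. Since each $\be{k}$ is constant, $[\bS,\be{k}]_{,l}=[\bS_{,l},\be{k}]$, and I replace $\bS_{,l}$ and $\bS_{,k}$ using the first equation of~\eqref{RS}. This rewrites the left-hand side purely in terms of $\bb$, $\bS$, $\bR$, the constant matrices $\be{k},\be{l}$, and their commutators. Grouping the terms that carry an explicit factor $\lambda$ separately from the rest, each group vanishes on its own. The $\lambda$-linear contribution is $-\lambda([[\be{l},\bR],\be{k}]-[[\be{k},\bR],\be{l}])$, which is zero because, for $k\neq l$, $[[\be{l},\bR],\be{k}]=\be{l}\bR\be{k}+\be{k}\bR\be{l}$ is visibly symmetric in $k,l$. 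For the $\lambda$-free part, applying the Jacobi identity to $[[[\bb,\be{l}],\bS],\be{k}]$ and to $[[[\bb,\be{k}],\bS],\be{l}]$ makes almost all terms cancel against $[[\bS,\be{k}],[\bb,\be{l}]]+[[\bb,\be{k}],[\bS,\be{l}]]$, leaving only the commutator of $[\be{l},[\bb,\be{k}]]-[\be{k},[\bb,\be{l}]]$ with $\bS$; and that inner difference vanishes once more by the Jacobi identity and $[\be{k},\be{l}]=0$.

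I do not expect a genuine obstacle: once the reduction $[\tilde{\bB},\be{k}]=[\bS,\be{k}]$ is spotted, the rest is a short exercise with the Jacobi identity, and the only care needed is in grouping the cancellations by powers of $\lambda$ and tracking signs. Two features of the computation deserve recording. First, the second equation of~\eqref{RS} (the one governing the antisymmetric part $\bR$) is never used, and neither is the fact that $\bb$ solves the Darboux--Egoroff system --- the displayed identity for $\bS$ is a purely algebraic consequence of the first equation of~\eqref{RS}. Second, the hypothesis that $\bB$ is a symmetry enters only implicitly: it is what makes the defining system~\eqref{W}, equivalently~\eqref{RS}, compatible, hence what guarantees the existence of a solution pair $(\bS,\bR)$ on the relevant covering. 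Finally, $\tilde{\bB}=\bS-\diag\bS$ is symmetric with zero diagonal, so it is of the admissible form~\eqref{Bij}, which completes the verification that it is a symmetry (a symmetry shadow, if $\bB$ or $\bW$ depend on pseudopotentials).
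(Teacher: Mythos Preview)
Your proof is correct and follows essentially the same route as the paper: reduce $[\tilde{\bB},\be{k}]$ to $[\bS,\be{k}]$ via $[\diag\bS,\be{k}]=0$, substitute the first equation of~\eqref{RS} for $\bS_{,k}$ and $\bS_{,l}$, and then kill the resulting expression with the Jacobi identity together with $[\be{k},\be{l}]=0$. The paper compresses the final step into ``one routinely verifies that $\bL=0$''; your explicit separation into the $\lambda$-linear and $\lambda$-free pieces, and your side remarks that neither the second equation of~\eqref{RS} nor the Darboux--Egoroff system itself is invoked, are accurate and add clarity.
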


\begin{proof} 
Let $\bB$ satisfy the linearized Darboux--Egoroff system~\eqref{lDEm}.
To verify that $\tilde{\bB} = \bS - \diag \bS$ satisfies equations~\eqref{lDEm} as well, we 
compute the left-hand side
$$
\bL = [\tilde{\bB},\be{k}]_{,l} - [\tilde{\bB},\be{l}]_{,k}
+ [[\tilde{\bB},\be{k}], [\bb,\be{l}]] + [[\bb,\be{k}], [\tilde{\bB},\be{l}]]
$$
and show that it simplifies to zero.
Since every two diagonal matrices commute, we have $[\diag\bS,\be{k}] = 0$. 
It follows that all terms containing $\diag \bS$ vanish and
$$
\bL\wall = [\bS_{,l},\be{k}]
 - [\bS_{,k},\be{l}] + [[\bS,\be{k}], [\bb,\be{l}]]
 + [[\bb,\be{k}], [\bS,\be{l}]]\\ 
= -\lambda [[\be{l}, \bR],\be{k}] + [[[\bb,\be{l}], \bS],\be{k}]
 + \lambda [[\be{k}, \bR],\be{l}] - [[[\bb,\be{k}], \bS],\be{l}]
 + [[\bS,\be{k}], [\bb,\be{l}]] + [[\bb,\be{k}], [\bS,\be{l}]]. 
\return
$$
Using the Jacobi identity and $[\be{k},\be{l}] = 0$, one routinely verifies 
that $\bL = 0$.
\end{proof}

To gain a better insight into the inverse recursion operator, 
we are going to use the construction described in Remark~\ref{remwpsi}. 
To this end, we consider the auxiliary linear system
\beq
\label{Q}
\Psi_{,k} = \bA^{k} \Psi, \qquad \det \Psi \neq 0,
\qquad k=1,\ldots,n,
\ee
for an $n\times n$ matrix $\Psi$.
Compatibility of system~\eqref{Q} follows from the zero-curvature condition~\eqref{zcr}.

\begin{proposition}
\label{prop:iROtriv}
Let\/ $\bB$ be a symmetry of the Darboux--Egoroff system, 
and let an $n \times n$ matrix $\bV$ satisfy
$$
\numbered\label{V}
\bV_{,k} = \Psi^{-1} [\bB,\be{k}] \Psi,\qquad k=1,\ldots,n.
$$

Set\/ $\bW = \Psi \bV \Psi^{-1}$.
Then 
\beq
\lb{bhbw}
\tilde{\bB} = \bW + \bW^\top - 2 \diag \bW
\ee
is another symmetry of the Darboux--Egoroff system. 
\end{proposition}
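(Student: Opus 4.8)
The plan is to deduce this statement directly from Remark~\ref{remwpsi} together with Proposition~\ref{prop:iRO}, observing that formula~\eqref{bhbw} is nothing but the matrix form of the componentwise expression~\eqref{iRO}. First I would recall that, as noted just before Proposition~\ref{prop:iRO}, for the ZCR~\eqref{zcr} with $\bA^{k}=\lambda\be{k}+[\bb,\be{k}]$ one has $\lin_{\bB}\bA^{k}=[\bB,\be{k}]$, since $\lin_{\bB}(\be{k})=0$ and $\lin_{\bB}(\bb)=\bB$. Consequently the equations~\eqref{V} for $\bV$ coincide with the $\bV$-part of system~\eqref{psibv} specialized to the present ZCR, and the compatibility of~\eqref{V} (in view of~\eqref{Q} and $\det\Psi\neq 0$) is exactly the compatibility asserted in Remark~\ref{remwpsi}; so $\bV$, and hence $\bW=\Psi\bV\Psi^{-1}$, are well defined as pseudopotentials.

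Next I would invoke the computation carried out in Remark~\ref{remwpsi}: it shows that $\bW=\Psi\bV\Psi^{-1}$ satisfies $\bW_{,k}=[\bA^{k},\bW]+\lin_{\bB}\bA^{k}=[\bA^{k},\bW]+[\bB,\be{k}]$, which is precisely equation~\eqref{W}, equivalently~\eqref{WH}. Writing this out entry by entry with $i,j,k$ pairwise distinct, one finds that the entries $W_{ij}$ of $\bW$ satisfy system~\eqref{iROw}. At this point the hypotheses of Proposition~\ref{prop:iRO} are met: $\bB$ is a symmetry of the Darboux--Egoroff system and the functions $W_{ij}$ solve~\eqref{iROw}.

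Applying Proposition~\ref{prop:iRO}, the matrix $\tilde{\bB}$ with $\tilde{B}_{ii}=0$ and $\tilde{B}_{ij}=W_{ij}+W_{ji}$ for $i\neq j$ is again a symmetry of the Darboux--Egoroff system. The last step is the purely bookkeeping one of matching this with~\eqref{bhbw}: the matrix $\bW+\bW^{\top}$ has off-diagonal entries $W_{ij}+W_{ji}$ and diagonal entries $2W_{ii}$, while $2\diag\bW$ contributes $2W_{ii}$ on the diagonal and nothing off it, so $\bW+\bW^{\top}-2\diag\bW$ has exactly the entries of $\tilde{\bB}$. I do not expect any genuine obstacle here; the only point requiring a little care is this identification of the matrix expression~\eqref{bhbw} with the indexed formula~\eqref{iRO}, together with checking that system~\eqref{V} is indeed the reduction of~\eqref{psibv} to the Darboux--Egoroff ZCR, so that Remark~\ref{remwpsi} applies verbatim.
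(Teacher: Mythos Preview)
Your proposal is correct and follows essentially the same route as the paper: identify~\eqref{V} with the $\bV$-equation of Remark~\ref{remwpsi} via $\lin_{\bB}\bA^{k}=[\bB,\be{k}]$, conclude that $\bW=\Psi\bV\Psi^{-1}$ satisfies~\eqref{W}, and then invoke Proposition~\ref{prop:iRO}. The only difference is that you spell out the matching of~\eqref{bhbw} with~\eqref{iRO} componentwise, whereas the paper compresses this into the phrase ``Proposition~\ref{prop:iRO} translated to matrix notation.''
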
 
\begin{proof}
Since $\lin_{\bB} \bA^{k} = [\bB,\be{k}]$, equation~\er{V} 
can be written as $\bV_{,k} = \Psi^{-1}(\lin_{\bB} \bA^{k})\Psi$. 
According to Remark~\ref{remwpsi}, equations~\er{Q} and 
$\bV_{,k} = \Psi^{-1}(\lin_{\bB} \bA^{k})\Psi$ imply that 
the matrix $\bW = \Psi \bV \Psi^{-1}$ satisfies~\eqref{W}. 
The rest is Proposition~\ref{prop:iRO} translated to matrix notation.
\end{proof}

As has been said in Section~\ref{convnot}, for any square matrix $\bP$ we denote 
$\nd\bP=\bP-\diag\bP$. Then formula~\eqref{bhbw} can be written as $\tilde{\bB}=\nd(\bW + \bW^\top)$. 

Let us apply Proposition~\ref{prop:iROtriv} to the zero symmetry $\bB=0$. 
Equation~\eqref{V} becomes simply
$\bV_{,k} = 0$, whence $\bV= \bc$ and $\bW= \Psi \bc \Psi^{-1}$, 
where $\bc$ is an arbitrary constant $n\times n$ matrix. Therefore, 
\beq
\lb{bhmain}
\tilde{\bB}=\nd(\bW + \bW^\top)=\nd(\Psi\bc \Psi^{-1} + (\Psi\bc\Psi^{-1})^\top).
\ee
In other words,~\eqref{bhmain} is the off-diagonal part of the matrix 
$\Psi\bc \Psi^{-1} + (\Psi\bc\Psi^{-1})^\top$.

As has been discussed above, in this section, symmetry shadows are also called symmetries. 
Using the more precise terminology introduced in Section~\ref{sec-prelim}, 
we can say that, for every constant $n\times n$ matrix~$\bc$, 
the matrix~\eqref{bhmain} is a symmetry shadow in the covering determined by 
the compatible system~\eqref{Q}.

\section{The direct recursion operator}
\lb{secdirect}

By inverting the inverse recursion operator constructed in Section~\ref{secinvrec}, 
we shall obtain what we call the direct recursion operator.  
The inversion means that we express the preimage
(formerly $\bB$, newly $\tilde{\bB}$) in terms of the image (formerly $\tilde{\bB}$, newly $\bB$). 
Under the notational change indicated, the relevant equations~\eqref{RS} and~\eqref{H'} 
read
$$
\numbered\label{iROm}
\bB = \bS - \diag \bS,
\\
\bS_{,k} = - \lambda [\be{k},\bR] + [[\bb,\be{k}],\bS],
\\
\bR_{,k} = - \lambda [\be{k},\bS] + [[\bb,\be{k}],\bR] - 2 [\tilde{\bB},\be{k}].
$$
The task is to express $\tilde{\bB}$ in terms of $\bB$, $\bb$, and some pseudopotentials.

To start with, we find the diagonal $\bP = \diag\bS$ by applying the diagonalization 
operator $\diag$ on both sides of the second equation from~\eqref{iROm}.
Since $\be{k}$ is diagonal, $\diag[\be{k},\bR] = 0$ by~\eqref{diag}, and similarly
$\diag[[\bb,\be{k}],\diag\bS] = 0$.
Therefore,
$$
\numbered\label{P}
\bP_{,k} = \diag \bS_{,k} = \diag[[\bb,\be{k}],\bS] = \diag[[\bb,\be{k}],\bB + \diag\bS]
 = \diag[[\bb,\be{k}],\bB],
$$
which is a compatible system of equations by virtue of~\eqref{DEm},~\eqref{lDEm}. 
Hence the entries of the diagonal matrix $\bP$ can be regarded as pseudopotentials. 

In what follows we shall need the useful formula
$$
\numbered\label{rec}
\bF - \diag\bF = \frac12 \sum_k [[\bF, \be{k}], \be{k}],
$$
which holds for an arbitrary $n \times n$ matrix $\bF$.
Consequently, the off-diagonal part $\bF - \diag\bF$ can be 
reconstructed from the commutators $[\bF, \be{k}]$. 
For instance, the third equation of system~\eqref{iROm} immediately yields
$$
[2 \tilde{\bB} - \lambda \bS, \be{k}] = -\bR_{,k} + [[\bb,\be{k}],\bR],
$$
where the diagonal part of $2 \tilde{\bB} - \lambda \bS$ is $-\lambda\,\diag \bS$.
Hence, formula~\eqref{rec} allows us to compute 
$2 \tilde{\bB} - \lambda (\bS - \diag \bS)$.
Combining the result with the first equation from~\eqref{iROm}, we obtain
$$
\numbered\label{iROm3}
4\tilde{\bB}-2\lambda\bB
 = -\sum_k [\bR_{,k},\be{k}] + \sum_k [[[\bb,\be{k}],\bR],\be{k}]
$$
(note that the right-hand side is symmetric, because $\bR$ is antisymmetric by 
assumption).

Next, from the second equation of system~\eqref{iROm} we have
\beq
\lb{labrbe}
\la[\bR,\be{k}]=\bS_{,k}-[[\bb,\be{k}],\bS].
\ee
As $\bR$ is antisymmetric, one has $\diag\bR=0$. 
Using formula~\eqref{rec} for $\bF=\la\bR$, from~\eqref{labrbe} we get 
$$
\numbered\label{R}
2\la\bR=\sum_k [\bS_{,k},\be{k}]-\sum_k [[[\bb,\be{k}],\bS],\be{k}]
$$
(note that the right-hand side is antisymmetric, because $\bS$ is symmetric by assumption).

Since $\bP$ and $\bP_{,k}$ are diagonal, we have 
$[\bP,\be{l}] = 0 = [\bP_{,k},\be{l}]$ for all $k,l$.
Then
$$
\sum_k [[[\bb,\be{k}],\bP],\be{k}] = \sum_k [[[\bb,\be{k}],\be{k}],\bP]
 = 2\,[\bb,\bP]
$$
by formula~\eqref{rec}, since $\diag \bb = 0$.
Therefore, inserting $\bS = \bB + \diag\bS = \bB + \bP$ into~\eqref{R}, we obtain 
$$
\numbered\label{RP}
2\lambda \bR = \sum_k [\bB_{,k},\be{k}]
 - \sum_k [[[\bb,\be{k}],\bB],\be{k}]
 - 2\,[\bb,\bP].
$$  

Multiplying~\eqref{iROm3} by $-2\la$ and replacing $k$ by $l$, one gets 
$$
\numbered\label{iROm3la}
-8\la\tilde{\bB}+4\lambda^2\bB
 = \sum_l [2\la\bR_{,l},\be{l}] - \sum_l [[[\bb,\be{l}],2\la\bR],\be{l}].
$$
Inserting~\eqref{RP} into equation~\eqref{iROm3la}, 
one can express $\tilde{\bB}$ in terms of $\bb$, $\bB$, $\bP$, and $\la$. 
Then the correspondence $\bB\mapsto\tilde{\bB}$ will be a recursion operator 
for symmetries of the Darboux--Egoroff system.  

However, we shall use a slightly different procedure.
Since $\bB$ and $\tilde{\bB}$ are symmetries, $-8\la\tilde{\bB}+4\lambda^2\bB$ 
is a symmetry as well. 
To simplify the formula for the corresponding recursion operator, 
it is more convenient to work 
with the symmetry $-8\la\tilde{\bB}+4\lambda^2\bB$ instead of $\tilde{\bB}$. 

If we denote $\hat{\bB}=-8\la\tilde{\bB}+4\lambda^2\bB$ and insert~\eqref{RP} into~\eqref{iROm3la}, 
the equation~\eqref{iROm3la} becomes 
$$
\numbered\label{hatbbb}
\hat{\bB}\wall 
 = \sum_l [2\la\bR_{,l},\be{l}] - \sum_l [[[\bb,\be{l}],2\la\bR],\be{l}]
\\
= \smash{\sum_{k,l}} (\wall [[\bB_{,kl},\be{k}],\be{l}]
 - [[[[\bb_{,l},\be{k}],\bB],\be{k}],\be{l}]
 - [[[[\bb,\be{k}],\bB_{,l}],\be{k}],\be{l}]
\\ - [[[\bb,\be{l}], [\bB_{,k},\be{k}]],\be{l}]
 + [[[\bb,\be{l}], [[[\bb,\be{k}],\bB],\be{k}]], \be{l}])
\return
- 2 \smash{\sum_{l}} ([[\bb_{,l},\bP],\be{l}]
 + [[\bb,\bP_{,l}],\be{l}]
 - [[[\bb,\be{l}], [\bb,\bP]], \be{l}]
).
\return
$$ 

Using~\eqref{rec}, it is easy to check that for any symmetric 
$n\times n$ matrices $\bM_1$, $\bM_2$ with zero diagonal we have 
$$ 
\sum_i [[\bM_1,\be{i}], [\bM_2,\be{i}]]=-[\bM_1,\bM_2],\qquad\quad
\sum_i [[[\bM_1,\be{i}],\bM_2],\be{i}]=[\bM_1,\bM_2]. 
$$
Using this property, one gets 
$$
\numbered\label{bbbb}
\sum_k
[[[\bb_{,l},\be{k}],\bB],\be{k}]=[\bb_{,l},\bB],\qquad
\sum_k
[[[\bb,\be{k}],\bB_{,l}],\be{k}]=[\bb,\bB_{,l}],\qquad
\sum_k 
[[[\bb,\be{k}],\bB],\be{k}]=[\bb,\bB],\\
\sum_{k,l}
[[[\bb,\be{l}], [\bB_{,k},\be{k}]],\be{l}]=
\sum_{k}
[[\bb,\bB_{,k}],\be{k}]+
\sum_{k,l}
[[\bB_{,k},[[\bb,\be{l}], \be{k}]],\be{l}].
$$
Since $\bP_{,l}=\diag[[\bb,\be{l}],\bB]$ in view of~\eqref{P}, we get
$$
\numbered\label{ppp}
[[\bb,\bP_{,l}],\be{l}]
 =[[\bb,\be{l}],\bP_{,l}]
 =[[\bb,\be{l}],\diag[[\bb,\be{l}],\bB]].
$$
Substituting~\eqref{bbbb} and~\eqref{ppp} in~\eqref{hatbbb}, one obtains
$$ 
\numbered\label{bhdir}
\hat{\bB} = \smash{\sum_{k,l}} (\wall [[\bB_{,kl},\be{k}],\be{l}] %
 + [[[[\bb,\be{l}],\be{k}],\bB_{,k}],\be{l}]) %
\return
\quad
 + \smash{\sum_{k}} (\wall 
[[[\bb,\be{k}], [\bb,\bB]],\be{k}]-[[\bb_{,k},\bB],\be{k}]+2\,[[\bB_{,k}, \bb], \be{k}]\\ %
 - 2\,[[\bb,\be{k}],\diag[[\bb,\be{k}],\bB]] %
 - 2\,[[\bb_{,k}, \bP], \be{k}] 
 + 2\,[[[\bb,\be{k}],[\bb,\bP]],\be{k}]).
\return
$$

By a straightforward computation, one can show the following. 
If $\bB$ is a symmetry and a diagonal matrix $\bP$ obeys  
$\bP_{,k}=\diag[[\bb,\be{k}],\bB]$ for all $k=1,\dots,n$, 
then $\hat{\bB}$ given by~\eqref{bhdir} 
is another symmetry. 
One can also check this property, 
using the component description of~\er{bhdir} given in Proposition~\ref{prop:RO} below. 

We call the correspondence $\bB\mapsto\hat{\bB}$ 
the \emph{direct recursion operator}~$\hat{\mathfrak{R}}$ 
for the Darboux--Egoroff system. 

The same recursion operator can be written in components as follows.
Let $\pot_1,\dots,\pot_n$ be the diagonal elements of the diagonal matrix~$\bP$. 
Then the equation $\bP_{,k}=\diag[[\bb,\be{k}],\bB]$ reads 
$$
\numbered\label{Psi}
\pot_{i,k} =
\begin{cases}
  -2\sum_s \bet_{is} B_{is}, & i = k; \\
  2\bet_{ik} B_{ik}, & i\neq k,
\end{cases}
$$
(some interpretation of~\eqref{Psi} is discussed in Remark~\ref{rempotent} below).

\begin{proposition} 
\label{prop:RO}
Let $\bB=(B_{ij})$ be a symmetry of the Darboux--Egoroff system and $\pot_i$ 
satisfy~\eqref{Psi}. 
Then $\hat{\bB}=(\hat{B}_{ij})$ given by $\hat{B}_{ii}=0$, 
\begin{equation}
\numbered\label{RO}
\hat{B}_{ij} =
B_{ij,ii}
 + \sum_s (\bet_{is} B_{sj,s}
 + \bet_{is}^2 B_{ij}
 + 3 \bet_{ij} \bet_{is} B_{is}
 + (\bet_{is,i} - \bet_{ij} \bet_{sj} + \sum_r \bet_{ir} \bet_{sr}) B_{sj})
\\\qquad  + \bet_{ij,i} \pot_i
 - (\bet_{ij,i} + \sum_s \bet_{is} \bet_{sj}) \pot_j
 + \sum_s \bet_{is} \bet_{sj} \pot_s,\quad\qquad i\neq j,
\end{equation}
is a symmetry of the Darboux--Egoroff system as well. 
The right-hand side of~\er{RO} is equal to the $(i,j)$-th component of the right-hand side 
of~\er{bhdir} divided by~$4$. 
\end{proposition}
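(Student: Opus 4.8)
The plan is to establish Proposition~\ref{prop:RO} as a direct corollary of the matrix-form computation that precedes it, namely the derivation of formula~\eqref{bhdir}, together with a routine component-by-component unpacking of that formula. First I would observe that all the genuine mathematical content has already been done: equations~\eqref{iROm3}, \eqref{RP}, and~\eqref{iROm3la} express $\hat{\bB}$ in terms of $\bB$, $\bb$, $\bP$, and $\la$, and the simplification via~\eqref{bbbb} and~\eqref{ppp} yields~\eqref{bhdir}; moreover the text already asserts (by ``a straightforward computation'') that $\hat{\bB}$ from~\eqref{bhdir} is a symmetry whenever $\bB$ is and $\bP$ satisfies $\bP_{,k}=\diag[[\bb,\be{k}],\bB]$. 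So the only thing Proposition~\ref{prop:RO} adds is the passage from the coordinate-free matrix expression~\eqref{bhdir} to the explicit scalar formula~\eqref{RO}, together with recording that the component equation $\bP_{,k}=\diag[[\bb,\be{k}],\bB]$ is exactly~\eqref{Psi}.

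The key steps, in order, would be: (1) Verify that~\eqref{Psi} is the $(i,k)$-entry form of $\pot_{i,k}=(\diag[[\bb,\be{k}],\bB])_{ii}$. For this one computes $[\bb,\be{k}]=\bb\be{k}-\be{k}\bb$, whose $(a,b)$-entry is $\bet_{ak}\delta_{bk}-\delta_{ak}\bet_{kb}$, then $[[\bb,\be{k}],\bB]$, and finally takes the $(i,i)$-diagonal entry; the two cases $i=k$ and $i\neq k$ fall out, using $B_{ii}=0$ and $\bet_{ii}=0$. (2) Expand the right-hand side of~\eqref{bhdir} entrywise. The needed building blocks are the $(i,j)$-entries of $[\bF,\be{k}]$ for a matrix $\bF$ (equal to $F_{ik}\delta_{jk}-\delta_{ik}F_{kj}$), of double commutators $[[\bF,\be{k}],\be{l}]$, of $[\bb,\bB]$, of $[[\bb,\be{k}],[\bb,\bB]]$, and of the terms involving $\bP$ such as $[\bb_{,k},\bP]$ and $[[\bb,\be{k}],[\bb,\bP]]$; since $\bP$ is diagonal these last ones simplify considerably. (3) Collect the six summands of~\eqref{bhdir}: the term $\sum_{k,l}[[\bB_{,kl},\be{k}],\be{l}]$ produces $B_{ij,ii}$ (the $k=l=i$ type contributions survive, by the same mechanism as formula~\eqref{rec}); the term $\sum_{k,l}[[[[\bb,\be{l}],\be{k}],\bB_{,k}],\be{l}]$ produces the $\bet_{is}B_{sj,s}$-type and mixed-$\bet$ terms; the remaining four $\sum_k$ terms produce the $\bet_{is}^2 B_{ij}$, $3\bet_{ij}\bet_{is}B_{is}$, the $(\bet_{is,i}-\bet_{ij}\bet_{sj}+\sum_r\bet_{ir}\bet_{sr})B_{sj}$ structure, and the three $\pot$-dependent terms. (4) Simplify using the Darboux--Egoroff relations~\eqref{DE} and the identity~\eqref{bijjiji}, e.g.\ replacing $\bet_{ij,j}$ by $-\bet_{ij,i}-\sum_{s\neq i,j}\bet_{is}\bet_{sj}$ wherever it appears, and likewise using $\bet_{ij,k}=\bet_{ik}\bet_{kj}$ for $k\neq i,j$; this is what reshapes the raw expansion into the precise grouping displayed in~\eqref{RO}. (5) Divide by $4$ and match.

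The main obstacle I expect is purely bookkeeping rather than conceptual: keeping track of which index-coincidence cases ($k=i$, $k=j$, $l=i$, $l=j$, $s=i$, $s=j$, etc.) contribute when each $\be{k}$ or $\be{l}$ is contracted, and doing so consistently across all six terms of~\eqref{bhdir}, so that the diagonal contributions cancel (since $\hat{B}_{ii}=0$ is imposed) and the off-diagonal ones assemble into exactly the stated polynomial in $\bet$'s, their $x$-derivatives, the $B$'s, their derivatives, and the $\pot$'s. A secondary subtlety is making sure the Darboux--Egoroff substitutions are applied uniformly so that the coefficient $\bet_{is,i}-\bet_{ij}\bet_{sj}+\sum_r\bet_{ir}\bet_{sr}$ and the combination $\bet_{ij,i}+\sum_s\bet_{is}\bet_{sj}$ appear in precisely the form written, rather than some equivalent rearrangement. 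Once the entrywise expansion of~\eqref{bhdir} is carried out carefully, the identification with~\eqref{RO} is immediate, and the symmetry property of $\hat{\bB}$ is inherited from the already-established matrix statement; alternatively, as the proposition notes, one can re-verify the symmetry property directly at the component level using~\eqref{lDE}, \eqref{DE}, and~\eqref{Psi}, which provides an independent check.
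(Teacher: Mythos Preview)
Your proposal is correct and matches the paper's own approach: the paper's proof of this proposition is nothing more than the sentence ``Analogously to Proposition~\ref{prop:iRO}, using equations~\eqref{DE},~\eqref{bijjiji},~\eqref{lDE}, one can prove this by a straightforward computation,'' and your plan is precisely an outline of what that straightforward computation entails---unpacking~\eqref{bhdir} entrywise, identifying~\eqref{Psi} as the component form of $\bP_{,k}=\diag[[\bb,\be{k}],\bB]$, and simplifying via the Darboux--Egoroff relations. The paper likewise notes that the symmetry property can be checked either via the matrix-level argument or directly in components, just as you say.
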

\begin{proof}
Analogously to Proposition~\ref{prop:iRO}, 
using equations~\er{DE},~\er{bijjiji},~\er{lDE}, 
one can prove this by a straightforward computation.
\end{proof}

Alternatively, one can obtain the recursion operator~\er{RO} as a reduction 
of the direct recursion operator of the IGSG equation~\cite{M-P},
using the fact that the Darboux--Egoroff system is a reduction of IGSG, 
as has been discussed in Remark~\ref{remdeigsg}.

Let us demonstrate the action of the recursion operator~\er{RO} on some symmetries of the 
Darboux--Egoroff system. 
Recall that $\bet_{ij}$ obey~\er{DE},~\er{bijjiji}. 
Taking $B_{ij} = 0$ as the seed symmetry, 
we see that equations~\eqref{Psi} read $\pot_{i,k}=0$, 
so $\pot_i$ are simply constants, $\pot_i = c_i$, 
and then $\hat{B}_{ij} = \sum_s c_s \bet_{ij,s}$. 
Consequently, the first members of the symmetry hierarchy
are the translations $\bet_{ij,l}$, $l = 1,\dots,n$.

In the next step we take the translation $B_{ij} = \bet_{ij,l}$ as the seed symmetry.
Then 
$$
\pot_i =
\begin{cases}
 -\sum_s \bet_{sl}^2, & i = l; \\
 \bet_{il}^2, & i\neq l,
\end{cases}
$$
satisfy~\eqref{Psi} for $B_{ij} = \bet_{ij,l}$, and formula~\eqref{RO} 
for $i \neq j$ gives
$$
\numbered\label{bijlll}
\hat{B}_{ij} = \bet_{ij,lll} + 3 \times
\begin{cases}
 \sum_s \bet_{ls} (\bet_{lj} \bet_{ls})_{,l}, & i = l; \\
 \sum_s \bet_{ls} (\bet_{il} \bet_{ls})_{,l}, & j = l; \\
 - \bet_{il,l} \bet_{lj,l} + \bet_{il} \bet_{lj} \sum_s \bet_{ls}^2, & i\neq l\neq j\neq i.
\end{cases}
$$

As usual, we can consider the flow associated with the constructed symmetry $\hat{B}_{ij}$. 
Namely, we assume that $\bet_{ij}$ depend on a parameter $t$ and consider the equations 
$\frac{\pd\bet_{ij}}{\pd t} = \hat{B}_{ij}$.

Since the only derivatives in~\eqref{bijlll} are with respect to $x_l$, 
the flow $\frac{\pd\bet_{ij}}{\pd t} = \hat{B}_{ij}$
can be interpreted as a system of evolution equations in the single spatial variable~$x_l$.
This system naturally decomposes into three subsystems:
$i = l$, i.e.,
$$
\numbered\label{el}
\frac{\pd\bet_{lj}}{\pd t} = \bet_{lj,lll} + 3\sum_s \bet_{ls} (\bet_{lj} \bet_{ls})_{,l},
$$
and similarly for $j = l$,
and the remaining equations 
$$
\numbered\label{er}
\frac{\pd\bet_{ij}}{\pd t} = 
\bet_{ij,lll} - 3 \bet_{li,l} \bet_{lj,l} + 3 \bet_{li} \bet_{lj} \sum_s \bet_{ls}^2,\qquad\quad
i\neq l\neq j\neq i.
$$
Denoting $x = x_l$, $u^j = \bet_{lj}$, we see that the subsystem~\eqref{el} becomes
$$
\numbered\label{elu}
u^j_t = u^j_{xxx} + 3 \sum_s u^s (u^j u^s)_x.
$$
Omitting the zero component $u^l = \bet_{ll} = 0$, let us combine the remaining components 
$u^j$, $j \ne l$, into an $(n-1)$-component vector $u$. 
Then~\eqref{elu} can be compactly written as
$$
\numbered\label{vmKdV}
u_t = u_{xxx} + 3 (u,u) u_x + 3 (u,u_x) u,
$$
which is the integrable 
{\it vector modified Korteweg--de Vries equation}~\cite[eq.~(3.9), $m = 1$]{A-F}, 
henceforth abbreviated as vmKdV. 
For its B\"acklund transformation and recursion operator, see~\cite[eq.~(4.10)]{B-M} 
and~\cite[eq.~(47)]{An}, respectively.
For a broader view, see~\cite[eq.~(11)]{S-S} or~\cite[eq.~(1.4)]{O-S}.

Observe that the remaining equations~\eqref{er} are linear in the remaining 
unknowns $\bet_{ij}$, $i,j \neq l$. One can say that the nonlinear substance of the 
symmetry~\eqref{bijlll} lies in the vmKdV equation~\eqref{vmKdV}.

Let us apply the recursion operator from Proposition~\ref{prop:RO} 
to the symmetry~\eqref{bijlll}. 
Let $B_{ij}$ for $i\neq j$ be the right-hand side of~\eqref{bijlll}.  
Then 
$$
\pot_i =
\begin{cases}
- 2 \sum_s \beta_{sl,ll} \beta_{sl} + \sum_s \beta_{sl,l}^2 - 3 \sum_s (\beta_{sl}^2)^2, 
 & i = l; \\
2 \beta_{il,ll} \beta_{il} - \beta_{il,l}^2 + 3 \beta_{il}^2 \sum_s \beta_{ls}^2, 
& i\neq l,
\end{cases}
$$
satisfy~\eqref{Psi}, and formula~\eqref{RO} gives a symmetry of order $5$. 
Similarly to the above discussion, 
in the structure of this symmetry one finds the 5th order vmKdV equation
$$
u_t = u_{xxxxx}
 + 5 (u,u) u_{xxx}
 + 15 (u,u_x) u_{xx}
 + 15 (u,u_{xx}) u_x
 + 10 (u_x,u_x) u_x
\\\quad
 + 10 (u,u)^2 u_x
 + 5 (u,u_{xxx}) u
 + 10 (u_x,u_{xx}) u
 + 20 (u,u)(u,u_x) u.
$$
The 5th order vmKdV equation is given also in~\cite[eq.~(62)]{An}, 
but one of the coefficients is different in~\cite[eq.~(62)]{An}, 
which seems to be a misprint.

\begin{remark}
\lb{rempotent}
Note that~\eqref{Psi} can be interpreted as follows. 
The Darboux--Egoroff system~\eqref{DE} possesses potentials $p_1,\dots,p_n$ 
defined by the compatible equations
$$
\numbered\label{psi}
p_{i,k} =
\begin{cases}
  -\sum_s \bet_{is}^2, & i = k; \\
  \bet_{ik}^2, & i\neq k.
\end{cases}
$$
Since the right-hand side of~\eqref{Psi} is obtained 
from the right-hand side of~\eqref{psi} by applying the linearization operator $\lin_{\bB}$, 
one can say that~\eqref{Psi} is the linearization of~\eqref{psi}. 
\end{remark}

Note that 
the Darboux--Egoroff system has the scaling symmetry $\bB^{*}=\bb+\sum_r x_r\bb_{,r}$. 
For this symmetry, equations~\er{Psi} read
$$
\numbered\label{Psisc}
\pot_{i,k} =
\begin{cases}
  -2\sum_{s} \bet_{is} (\bet_{is}+\sum_r x_r\bet_{is,r}), & i = k; \\
  2\bet_{ik} (\bet_{ik}+\sum_r x_r\bet_{ik,r}), & i\neq k.
\end{cases}
$$
Using the potentials $p_i$ defined by~\er{psi}, 
it is easy to check that $\pot_i = p_i + \sum_s (x_s - x_i) \bet_{is}^2$ satisfy~\er{Psisc}.

Applying the recursion operator~\eqref{RO} to $\bB^{*}$, one obtains 
a nonlocal symmetry depending on $x_k$, $\bet_{ij}$, $\bet_{ij,k}$, $\bet_{ij,kl}$, 
$\bet_{ij,klm}$, and $\pot_i = p_i + \sum_s (x_s - x_i) \beta_{is}^2$. 
(More precisely, 
one obtains a symmetry shadow in the covering determined by the compatible system~\eqref{psi}, 
but, in the literature on symmetries of PDEs, 
symmetry shadows of this type are usually called nonlocal symmetries.) 
The repeated application of the recursion operator to $\bB^{*}$  
yields an infinite number of symmetry shadows for the Darboux--Egoroff system. 

\section{An infinite-dimensional Lie algebra of nonlocal symmetries}
\lb{secnonlsym}

Recall that, in matrix notation, the Darboux--Egoroff system~\er{DEm} reads
\begin{equation}
\label{demy}
[\bb,\be{k}]_{,l} - [\bb,\be{l}]_{,k} + [[\bb,\be{k}], [\bb,\be{l}]] = 0,\qquad\quad 
k,l=1,\dots,n,
\end{equation}
and possesses the ZCR 
\beq
\lb{zcrdemy}
\bA^{k}_{,l}-\bA^{l}_{,k}+[{\bA}^{k},{\bA}^{l}]=0,\qquad \bA^{k}=\lambda\be{k}+[\bb,\be{k}], 
\qquad k,l=1,\dots,n.
\ee

The corresponding auxiliary linear system
\begin{equation}
\label{laps}
\Psi_{,k} = \bA^{k} \Psi=(\lambda \be{k} + [\bb,\be{k}])\Psi, \qquad \det \Psi \neq 0,
\qquad k=1,\ldots,n,
\end{equation}
is compatible modulo~\eqref{demy}. Here $\Psi$ is an invertible $n\times n$ matrix-function. 
According to the terminology described in Section~\ref{sec-prelim}, 
the compatible system~\eqref{laps} determines a covering of~\eqref{demy} 
with the matrix pseudopotential $\Psi$.

As has been said in Section~\ref{convnot}, 
the entries of all considered matrices belong to~$\fik$, where $\fik$ is either $\Com$ or $\mathbb{R}$. 
For any $\sm\in\zsp$, we denote by $\gl_{\sm}(\fik)$ the algebra of all $\sm\times\sm$ matrices 
and by $\mathrm{GL}_{\sm}(\fik)$ the group of invertible $\sm\times\sm$ matrices 
with entries from $\fik$.

According to~\eqref{bhmain}, for every constant matrix $\bc\in\gl_n(\fik)$, 
the matrix-function 
\beq
\lb{shadts}
\hat{\bB}=\nd(\Psi\bc \Psi^{-1} + (\Psi\bc\Psi^{-1})^\top)
\ee 
is a symmetry shadow for~\eqref{demy} in the covering determined by~\eqref{laps}. 
This symmetry shadow has been obtained in Section~\ref{secinvrec} 
by applying the inverse recursion operator to the zero symmetry 
in the framework of Proposition~\ref{prop:iROtriv}.

Since system~\eqref{laps} depends on the parameter $\la$,
we can use the standard idea of introducing 
a formal Taylor series expansion $\Psi=\sum_{i=0}^\infty\la^i \Psi_i$ 
for~$\Psi$ in~$\lambda$, where $\det \Psi_0\neq 0$. 
Substituting $\Psi=\sum_{i=0}^\infty\la^i \Psi_i$ in equation~\eqref{laps}, one gets 
\begin{equation}
\label{cw0ik}
\Psi_{0,k} =[\bb,\be{k}]\Psi_0,\qquad 
\Psi_{j,k} =\be{k}\Psi_{j-1} + [\bb,\be{k}]\Psi_j,\qquad
j\in\zsp,\qquad k=1,\dots,n. 
\end{equation}
As system~\eqref{cw0ik} is compatible modulo~\eqref{demy}, 
we can regard $\Psi_i$ as matrix pseudopotentials for~\eqref{demy}. 

Denote by $\mathrm{GL}_n(\fik[[\la]])$ the group of formal power series of the form 
\beq
\lb{lacwcw0}
\sum_{i=0}^\infty\la^i \Psi_i,\qquad\Psi_i\in\gl_n(\fik),\qquad
\det \Psi_0\neq 0.
\ee

Substituting $\Psi=\sum_{i=0}^\infty\la^i \Psi_i$ in the symmetry shadow~\eqref{shadts} 
and considering the terms of degree zero in~$\la$, we see that the matrix-function 
\beq
\lb{shadcc}
\bB'=\nd(\Psi_0\bc \Psi_0^{-1} + (\Psi_0\bc\Psi_0^{-1})^\top)
\ee 
is a symmetry shadow in the covering determined by~\eqref{cw0ik}. 

Let $\gl_n(\fik[[\la]])$ be the algebra of formal power series 
$\sum_{i=0}^\infty\la^i P_i$ with coefficients $P_i\in\gl_n(\fik)$.
For any 
$$
P(\la)=\sum_{i=0}^\infty\la^i P_i\,\in\,\gl_n(\fik[[\la]]),
$$ 
set $P^\top(-\la)=\sum_{i=0}^\infty(-\la)^i P_i^\top$.

Since the functions $\bA^{k} = \lambda \be{k} + [\bb,\be{k}]$ 
from \eqref{zcrdemy} are polynomial in $\la$, 
we can regard $\bA^{k}$ as $\gl_n(\fik[[\la]])$-valued functions. 
Since the matrix $\be{k}$ is symmetric and $[\bb,\be{k}]$ is skew-symmetric, 
the functions $\bA^{k}$ take values in the Lie subalgebra 
$$
\tpa=\left.\{P(\la)\in\gl_n(\fik[[\la]])\,\ \ \right|\,\ \ P(\la)+P^\top(-\la)=0\}.
$$
One has $P(\la)=\sum_{i=0}^\infty\la^i P_i\,\in\,\tpa$ 
if and only if $P_i^\top=(-1)^{i+1}P_i$ for all $i$.

\begin{remark}
\lb{remobserv} 
The following simple well-known observation will be useful for us. 

Let $\sm\in\zsp$. 
Let $G\subset\mathrm{GL}_{\sm}(\fik)$ be a closed Lie subgroup of $\mathrm{GL}_{\sm}(\fik)$ 
and $\mg\subset\gl_{\sm}(\fik)$ be the Lie algebra of~$G$. 
Suppose that $\mg$-valued functions $\fzc^{k}=\fzc^{k}(x_1,\dots,x_n)$, $k=1,\ldots,n$, 
satisfy 
$$
\fzc^{k}_{,l} - \fzc^{l}_{,k} + [{\fzc}^{k},{\fzc}^{l}] = 0,\qquad 
k,l=1,\ldots,n.
$$
Then the following system is compatible 
\beq
\lb{covglsmxx}
Q_{,k}=\fzc^{k}Q,\qquad k=1,\ldots,n,\qquad Q\in\mathrm{GL}_{\sm}(\fik),
\ee
where $Q=Q(x_1,\dots,x_n)$ is a $\mathrm{GL}_{\sm}(\fik)$-valued function.  
As $\fzc^{k}$ take values in the Lie algebra~$\mg$ corresponding to the Lie subgroup 
$G\subset\mathrm{GL}_{\sm}(\fik)$, the following system is compatible as well 
\beq
\lb{covglsmxxgg}
\tilde{Q}_{,k}=\fzc^{k}\tilde{Q},\qquad k=1,\ldots,n,\qquad \tilde{Q}\in G,
\ee
where $\tilde{Q}=\tilde{Q}(x_1,\dots,x_n)$ takes values in~$G$.
We say that~\eqref{covglsmxxgg} is the reduction of system~\eqref{covglsmxx} to the subgroup 
$G\subset\mathrm{GL}_{\sm}(\fik)$.
\end{remark}

Let us try to use the idea of Remark~\ref{remobserv} in the case when 
$\mathrm{GL}_{\sm}(\fik)$ is replaced by $\mathrm{GL}_n(\fik[[\la]])$.
Recall that system~\eqref{cw0ik} can be written as 
\beq
\lb{lsdela}
\Psi(\la)_{,k} = \bA^{k}\Psi(\la)=(\lambda \be{k}+[\bb,\be{k}])\Psi(\la),\qquad k=1,\ldots,n, 
\qquad \Psi(\la)=\sum_{i=0}^\infty\la^i \Psi_i\in\mathrm{GL}_n(\fik[[\la]]).
\ee
We know that the functions $\bA^{k}=\lambda \be{k} + [\bb,\be{k}]$ 
take values in the Lie subalgebra $\tpa\subset\gl_n(\fik[[\la]])$.
In order to make for system~\eqref{lsdela} an analog of the reduction described in Remark~\ref{remobserv}, 
we need to define a ``Lie subgroup of $\mathrm{GL}_n(\fik[[\la]])$ corresponding 
to the Lie subalgebra $\tpa\subset\gl_n(\fik[[\la]])$''.

However, since $\dim\tpa=\infty$, it is not obvious how to define such a Lie subgroup. 
Nevertheless, one can use the following reasoning. 
If a ``Lie subgroup of $\mathrm{GL}_n(\fik[[\la]])$ corresponding 
to the Lie subalgebra $\tpa\subset\gl_n(\fik[[\la]])$'' can be defined 
in any reasonable sense, then this subgroup must contain the elements $\exp(P(\la))$ for 
all $P(\la)\in\tpa$. 

Each $P(\la)\in\tpa$ satisfies $P(\la)+P^\top(-\la)=0$, which implies that 
$R(\la)=\exp(P(\la))$ obeys 
$R(\la)\cdot R^\top(-\la)=\bE$,
where $\bE$ is the unit matrix. 
Therefore, it makes sense to consider the \mbox{subgroup} 
\beq
\lb{tgrdef}
\tgr=\left.\{\,\Psi(\la)\in\mathrm{GL}_n(\fik[[\la]])\ \ \ \right|
\ \ \ \Psi(\la)\cdot\Psi^\top(-\la)=\bE\,\}.
\ee

So let us impose the constraint 
\beq
\lb{psilabe}
\Psi(\la)\cdot\Psi^\top(-\la)=\bE,\qquad\quad \Psi(\la)=\sum_{i=0}^\infty\la^i \Psi_i.
\ee
It is easy to check that~\eqref{psilabe} is compatible with~\eqref{lsdela}. 
Informally speaking, by analogy with Remark~\ref{remobserv}, 
one can say that system~\eqref{lsdela},~\eqref{psilabe}  
is the reduction of system~\eqref{lsdela} to the subgroup $\tgr\subset\mathrm{GL}_n(\fik[[\la]])$.

However, we do not consider any Lie group structure on~\eqref{tgrdef}. 
We need only the fact that the constraint~\eqref{psilabe} is compatible with equations~\eqref{lsdela}.
The above discussion on Lie groups is just a motivation for introducing the constraint~\eqref{psilabe}.

The idea to use the subgroup~\eqref{tgrdef} in connection with 
the Darboux--Egoroff system~\eqref{demy} is known~(see, e.g.,~\cite{vdltw}). 
Our aim in the above discussion has been to explain how one can guess that 
this subgroup should be considered, 
by analogy with the situation discussed in Remark~\ref{remobserv}.

Recall that~\eqref{shadcc} is a symmetry shadow. 
Equation~\eqref{psilabe} implies $\Psi_0\Psi_0^\top=\bE$, hence $\Psi_0^{-1}=\Psi_0^\top$. 
Therefore, after imposing the constraint~\eqref{psilabe}, we can rewrite formula~\eqref{shadcc} as 
\beq
\lb{hatbh}
\bB'=\nd(\Psi_0(\bc+\bc^\top)\Psi_0^\top).
\ee
To obtain~\eqref{hatbh}, we have substituted $\Psi_0^{-1}=\Psi_0^\top$ in~\eqref{shadcc}. 

Let $\br=\bc+\bc^\top$. Then $\br^\top=\br$. 
Since,  for any constant matrix $\bc\in\gl_n(\fik)$, 
the function~\eqref{hatbh} is a symmetry shadow in the covering determined by 
the compatible system~\eqref{lsdela},~\eqref{psilabe}, we see that 
\beq
\lb{bhbr}
\bB_{\br}=\nd\Psi_0\br\Psi_0^\top
\ee
is a symmetry shadow in this covering  
for any constant matrix $\br\in\gl_n(\fik)$ satisfying $\br^\top=\br$. 
Let us find a nonlocal symmetry such that the corresponding symmetry shadow is~\eqref{bhbr}.

According to the definition of nonlocal symmetries presented in Section~\ref{sec-prelim}, 
we need to find matrix-functions 
$$
\nsy_i=\nsy_i(\Psi_l,x_k,\bb,\dots),\qquad\quad i\in\zp,
$$ 
so that the following property holds. 
If $\bb$, $\Psi_i$ satisfy~\eqref{demy},~\eqref{lsdela},~\eqref{psilabe}  
then the infinitesimal deformation 
\beq
\lb{deformnsy}
\tilde{\bb}=\bb+\ve\nd\Psi_0\br\Psi_0^\top,\qquad\quad
\tilde{\Psi}_i=\Psi_i+\ve\nsy_i, \qquad\quad i\in\zp,
\ee
obeys equations~\eqref{demy},~\eqref{lsdela},~\eqref{psilabe} up to $O(\varepsilon^2)$. 
Here $\nsy_i$ may depend on $\Psi_l$, $x_k$, $\bb$, and derivatives of $\bb$. 

Considering the terms of degree zero in~$\la$ in equations~\eqref{lsdela}, we get 
\beq
\lb{psi0kbh}
\Psi_{0,k} =[\bb,\be{k}]\Psi_0,\qquad\quad k=1,\dots,n.
\ee
Substituting $\tilde{\bb}=\bb+\ve\nd\Psi_0\br\Psi_0^\top$ and $\tilde{\Psi}_0=\Psi_0+\ve\nsy_0$ 
in place of $\bb$, $\Psi_0$ in~\eqref{psi0kbh}   
and collecting the terms linear in~$\ve$, one obtains 
\beq
\lb{nsy0k}
\nsy_{0,k} =[\nd\Psi_0\br\Psi_0^\top,\be{k}]\Psi_0+[\bb,\be{k}]\nsy_0,\qquad\quad k=1,\ldots,n. 
\ee 

Let us first try to find a function $\nsy_0$ such that $\nsy_0$ obeys~\eqref{nsy0k}. 
From~\eqref{lsdela},~\eqref{psi0kbh} one gets
\beq
\lb{pdpsipsi}
\Psi_{0,k} =[\bb,\be{k}]\Psi_0,\qquad\quad
\Psi^\top_{0,k}=-\Psi^\top_0[\bb,\be{k}],\qquad\quad
\Psi_{1,k} =\be{k}\Psi_{0} + [\bb,\be{k}]\Psi_1.
\ee 
The form of equations~\eqref{pdpsipsi},~\eqref{nsy0k}
suggests to seek a function $\nsy_0$ equal to a polynomial in 
$\br$, $\Psi_0$, $\Psi_0^{\top}$, $\Psi_{1}$. 
Trying low-degree polynomials, one finds that 
\beq
\lb{nsy0for}
\nsy_0=\Psi_0\br\Psi_0^{\top}\Psi_{1}-\Psi_{1}\br
\ee 
is suitable. 
Indeed, using~\eqref{pdpsipsi},~\eqref{nsy0for}, $\Psi_0^\top\Psi_0=\bE$, 
and $[\nd\Psi_0\br\Psi_0^\top,\be{k}]=[\Psi_0\br\Psi_0^\top,\be{k}]$, we obtain 
\beq
\lb{pdnsy0f}
\nsy_{0,k}\wall=
\Psi_{0,k}\br\Psi_0^{\top}\Psi_{1}
+\Psi_0\br\Psi_{0,k}^{\top}\Psi_{1}
+\Psi_0\br\Psi_{0}^{\top}\Psi_{1,k}-\Psi_{1,k}\br\\
=[\bb,\be{k}]\Psi_0\br\Psi_0^\top\Psi_1
-\Psi_0\br\Psi_0^\top[\bb,\be{k}]\Psi_1
+\Psi_0\br\Psi_0^\top(\be{k}\Psi_{0} + [\bb,\be{k}]\Psi_1)-(\be{k}\Psi_{0} + [\bb,\be{k}]\Psi_1)\br\\
=[\bb,\be{k}]\Psi_0\br\Psi_0^\top\Psi_1
+\Psi_0\br\Psi_0^\top\be{k}\Psi_{0}
-(\be{k}\Psi_{0} + [\bb,\be{k}]\Psi_1)\br,
\return
\ee
\beq
\lb{psinsy0}
[\nd\Psi_0\br\Psi_0^\top,\be{k}]\Psi_0+[\bb,\be{k}]\nsy_0=
[\Psi_0\br\Psi_0^\top,\be{k}]\Psi_0+[\bb,\be{k}]\nsy_0\\
\qquad=\Psi_0\br\Psi_0^\top\be{k}\Psi_0-\be{k}\Psi_0\br\Psi_0^\top\Psi_0
+[\bb,\be{k}]\nsy_0
=\Psi_0\br\Psi_0^\top\be{k}\Psi_0-\be{k}\Psi_0\br
+[\bb,\be{k}](\Psi_0\br\Psi_0^{\top}\Psi_{1}-\Psi_{1}\br).
\ee
Comparison of~\eqref{pdnsy0f} and~\eqref{psinsy0} shows  
that $\nsy_0=\Psi_0\br\Psi_0^{\top}\Psi_{1}-\Psi_{1}\br$ 
obeys~\eqref{nsy0k}.

Now~\eqref{nsy0for} suggests to try the following formulas for functions $\nsy_i$
\beq
\lb{nsyif}
\nsy_i=\Psi_0\br\Psi_0^{\top}\Psi_{i+1}-\Psi_{i+1}\br,\qquad\quad i\in\zp.
\ee 
The next proposition shows that~\eqref{nsyif} satisfy the required properties. 
\begin{proposition}
\lb{propns}
For any constant symmetric matrix $\br\in\gl_n(\fik)$, 
consider $\nsy_i$ given by~\eqref{nsyif}.
Then the infinitesimal deformation~\eqref{deformnsy} 
is a symmetry of system~\eqref{demy},~\eqref{lsdela},~\eqref{psilabe}.
We denote this symmetry by $S_{\br}$.

Since $\Psi_i$ satisfying~\eqref{lsdela},~\eqref{psilabe} are pseudopotentials 
for the Darboux--Egoroff system~\eqref{demy}, 
the symmetry $S_{\br}$ is a nonlocal symmetry of the Darboux--Egoroff system~\eqref{demy} 
in the covering determined by~\eqref{lsdela},~\eqref{psilabe}.

\end{proposition}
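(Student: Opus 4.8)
The plan is to verify, under the assumption that $\bb$ and the $\Psi_i$ satisfy \eqref{demy}, \eqref{lsdela}, \eqref{psilabe}, that the deformation \eqref{deformnsy} with $\nsy_i$ as in \eqref{nsyif} satisfies these same equations up to $O(\ve^2)$, which is exactly what it means for \eqref{deformnsy} to be a symmetry. There are three groups of equations to examine. The equations \eqref{demy} involve only the $\bb$-component $\tilde\bb=\bb+\ve\,\nd(\Psi_0\br\Psi_0^\top)$ of \eqref{deformnsy}, and they hold up to $O(\ve^2)$ because $\bB_\br=\nd(\Psi_0\br\Psi_0^\top)$ was already shown, just before \eqref{bhbr}, to be a symmetry shadow of \eqref{demy}. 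So the whole task reduces to the equations governing the pseudopotentials $\Psi_i$.

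To treat all $i$ simultaneously, I would work with the generating series $\nsy(\la)=\sum_{i\ge 0}\la^i\nsy_i$. Since $\sum_{i\ge 0}\la^i\Psi_{i+1}=\la^{-1}(\Psi(\la)-\Psi_0)$, formula \eqref{nsyif} together with $\Psi_0^\top\Psi_0=\bE$ (the degree-zero part of \eqref{psilabe}, which simplifies $\Psi_0\br\Psi_0^\top\Psi_0=\Psi_0\br$) yields the closed form
\[
\nsy(\la)=\la^{-1}\bigl(\Psi_0\br\Psi_0^\top\,\Psi(\la)-\Psi(\la)\,\br\bigr),
\]
whose coefficient of $\la^i$ is exactly \eqref{nsyif} and whose coefficient of $\la^0$ is \eqref{nsy0for}, so that the series really encodes the $\nsy_i$. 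Linearizing \eqref{lsdela} along \eqref{deformnsy} gives the single equation $\nsy(\la)_{,k}=(\la\be{k}+[\bb,\be{k}])\nsy(\la)+[\bB_\br,\be{k}]\Psi(\la)$, where $[\bB_\br,\be{k}]=[\Psi_0\br\Psi_0^\top,\be{k}]$ because $\diag(\Psi_0\br\Psi_0^\top)$ commutes with $\be{k}$. I would check this identity directly from the closed form, using $\Psi(\la)_{,k}=(\la\be{k}+[\bb,\be{k}])\Psi(\la)$ and the consequence $(\Psi_0\br\Psi_0^\top)_{,k}=[[\bb,\be{k}],\Psi_0\br\Psi_0^\top]$ of \eqref{psi0kbh} and the skew-symmetry of $[\bb,\be{k}]$; the $[\bb,\be{k}]$-terms telescope and the only surviving requirement is $\be{k}\cdot\la\nsy(\la)=\la\,\be{k}\nsy(\la)$, which is immediate. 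Reading off the coefficient of $\la^i$ then gives the linearization of the $i$-th equation of \eqref{cw0ik}: for $i=0$ this is the already-verified \eqref{nsy0k}, and for $i\ge 1$ it is the new equation $\nsy_{i,k}=\be{k}\nsy_{i-1}+[\bb,\be{k}]\nsy_i+[\bB_\br,\be{k}]\Psi_i$.

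It remains to verify that \eqref{psilabe} is preserved. Its linearization along \eqref{deformnsy} is $\nsy(\la)\Psi^\top(-\la)+\Psi(\la)\nsy^\top(-\la)=0$. Putting $M(\la)=\Psi^{-1}(\la)\nsy(\la)$ and using $\Psi^{-1}(\la)=\Psi^\top(-\la)$ from \eqref{psilabe}, the linearized constraint is equivalent to $M(\la)+M^\top(-\la)=0$, that is, to $M(\la)\in\tpa$. Substituting the closed form of $\nsy(\la)$ and using $\Psi^\top(-\la)\Psi(\la)=\bE$, one gets $M(\la)=\la^{-1}\bigl(\Psi^\top(-\la)\,\Psi_0\br\Psi_0^\top\,\Psi(\la)-\br\bigr)$. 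Because $\br^\top=\br$, the product $\Psi^\top(-\la)\Psi_0\br\Psi_0^\top\Psi(\la)$ is invariant under the formal operation $P(\la)\mapsto P^\top(-\la)$, hence $M^\top(-\la)=-M(\la)$, as needed. This is the same reduction mechanism that motivated the subgroup $\tgr$ in \eqref{tgrdef}.

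I do not expect a real obstacle here: once the closed-form expression for $\nsy(\la)$ is established, the three verifications are routine computations with formal power series in $\la$. The only points requiring some care are keeping the formal transpose $P(\la)\mapsto P^\top(-\la)$ straight throughout, and confirming that the degree-zero specialization of the generating-series computation for \eqref{lsdela} agrees with the hand computation \eqref{pdnsy0f}--\eqref{psinsy0} carried out before the proposition.
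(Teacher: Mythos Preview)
Your proposal is correct and follows essentially the same approach as the paper: both pass to the generating series $\nsy(\la)=\la^{-1}\bigl(\Psi_0\br\Psi_0^\top\,\Psi(\la)-\Psi(\la)\,\br\bigr)$, invoke the fact that $\bB_{\br}$ is already a symmetry shadow for~\eqref{demy}, and then verify the linearizations of~\eqref{lsdela} and~\eqref{psilabe} by direct computation with this closed form. Your verification of~\eqref{psilabe} via $M(\la)=\Psi^{-1}(\la)\nsy(\la)\in\tpa$ is a cosmetic repackaging of the paper's direct check that $\Psi(\la)\nsy^\top(-\la)+\nsy(\la)\Psi^\top(-\la)=0$, differing only by conjugation by~$\Psi(\la)$.
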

\begin{proof}
Let $\bb$ and $\Psi(\la)=\sum_{i=0}^\infty\la^i \Psi_i$ satisfy~\eqref{demy},~\eqref{lsdela},~\eqref{psilabe}.  
Set $\nsy(\la)=\sum_{i=0}^\infty\la^i\nsy_i$. 
Using~\eqref{nsyif} and $\Psi_0^\top\Psi_0=\bE$, one obtains 
$$
\nsy(\la)=\sum_{i=0}^\infty\la^i\nsy_i=
\sum_{i=0}^\infty\la^i(\Psi_0\br\Psi_0^{\top}\Psi_{i+1}-\Psi_{i+1}\br)=
\la^{-1}(\Psi_0\br\Psi_0^{\top}\Psi(\la)-\Psi(\la)\br).
$$

To prove that the infinitesimal deformation~\eqref{deformnsy} 
is a symmetry of system~\eqref{demy},~\eqref{lsdela},~\eqref{psilabe},  
we need to show that $\tilde{\bb}=\bb+\ve\nd\Psi_0\br\Psi_0^\top$ 
and $\tilde{\Psi}(\la)=\Psi(\la)+\ve\nsy(\la)$ obey 
equations~\eqref{demy},~\eqref{lsdela},~\eqref{psilabe} up to $O(\varepsilon^2)$. 

Since~\eqref{bhbr} is a symmetry shadow, $\tilde{\bb}=\bb+\ve\nd\Psi_0\br\Psi_0^\top$ satisfies  
equations~\eqref{demy} up to $O(\varepsilon^2)$. 

Using~\eqref{psilabe} and $\nsy(\la)=\la^{-1}(\Psi_0\br\Psi_0^{\top}\Psi(\la)-\Psi(\la)\br)$, one gets
\beq
\lb{tplatp}
\tilde{\Psi}(\la)\tilde{\Psi}^\top(-\la) =
(\Psi(\la)+\ve\nsy(\la))(\Psi^\top(-\la)+\ve\nsy^\top(-\la))
=\bE+\ve\big(\Psi(\la)\nsy^\top(-\la)+\nsy(\la)\Psi^\top(-\la)\big)+
\ve^2\nsy(\la)\nsy^\top(-\la), 
\ee
\beq
\lb{planst}
\Psi(\la)\nsy^\top(-\la)+\nsy(\la)\Psi^\top(-\la)\wall =
\Psi(\la)(-\la^{-1}\left(\Psi^\top(-\la)\Psi_0\br\Psi_0^{\top}-\br\Psi^\top(-\la)\right))
+\la^{-1}(\Psi_0\br\Psi_0^{\top}\Psi(\la)-\Psi(\la)\br)\Psi^\top(-\la)
\return
\\
\qqquad\qqquad=-\la^{-1}(\Psi_0\br\Psi_0^{\top}-\Psi(\la)\br\Psi^\top(-\la))
+\la^{-1}(\Psi_0\br\Psi_0^{\top}-\Psi(\la)\br\Psi^\top(-\la))=0.
\ee
Equations~\eqref{tplatp},~\eqref{planst} yield  
$\tilde{\Psi}(\la)\tilde{\Psi}^\top(-\la)=\bE+\ve^2\nsy(\la)\nsy^\top(-\la)$, 
so $\tilde{\Psi}(\la)$ obeys~\eqref{psilabe} up to $O(\varepsilon^2)$.

Using~\eqref{lsdela},~\eqref{pdpsipsi}, 
$\nsy(\la)=\la^{-1}(\Psi_0\br\Psi_0^{\top}\Psi(\la)-\Psi(\la)\br)$, and 
$[\nd\Psi_0\br\Psi_0^\top,\be{k}]=[\Psi_0\br\Psi_0^\top,\be{k}]$,  
for $k=1,\dots,n$ we obtain 
$$
\tilde{\Psi}(\la)_{,k}=\Psi(\la)_{,k}+\ve\nsy(\la)_{,k}=
\Psi(\la)_{,k}+\ve(\la^{-1}(\Psi_0\br\Psi_0^{\top}\Psi(\la)-\Psi(\la)\br))_{,k}\\
\quad\wall=\Psi(\la)_{,k}+\ve\la^{-1}
([\bb,\be{k}]\Psi_0\br\Psi_0^{\top}\Psi(\la)
-\Psi_0\br\Psi^\top_0[\bb,\be{k}]\Psi(\la)
+\Psi_0\br\Psi_0^{\top}(\lambda \be{k}+[\bb,\be{k}])\Psi(\la)
-(\lambda \be{k}+[\bb,\be{k}])\Psi(\la)\br)\\
=(\lambda \be{k}+[\bb,\be{k}])\Psi(\la)+\ve
((\lambda \be{k}+[\bb,\be{k}])\la^{-1}(\Psi_0\br\Psi_0^{\top}\Psi(\la)-\Psi(\la)\br)
+[\Psi_0\br\Psi^\top_0,\be{k}]\Psi(\la))\\
=(\lambda \be{k}+[\bb+\ve\Psi_0\br\Psi^\top_0,\be{k}])(\Psi(\la)+\ve\nsy(\la))
-\ve^2[\Psi_0\br\Psi^\top_0,\be{k}]\nsy(\la)\\
=(\lambda \be{k}+[\tilde{\bb},\be{k}])\tilde{\Psi}(\la)-\ve^2[\Psi_0\br\Psi^\top_0,\be{k}]\nsy(\la),
\return
$$
which implies that $\tilde{\bb}$, $\tilde{\Psi}(\la)$ obey~\eqref{lsdela} up to $O(\varepsilon^2)$. 
\end{proof}

According to Proposition~\ref{propns}, 
for each $\br\in\gl_n(\fik)$ satisfying $\br^\top=\br$, 
we have the symmetry~$S_{\br}$ of system~\eqref{demy},~\eqref{lsdela},~\eqref{psilabe}, 
where $S_{\br}$ is given by the infinitesimal deformation~\eqref{deformnsy} 
with $\nsy_i=\Psi_0\br\Psi_0^{\top}\Psi_{i+1}-\Psi_{i+1}\br$.

Let $\sla$ be the Lie algebra generated by the symmetries $S_{\br}$ 
for all symmetric matrices $\br\in\gl_n(\fik)$.
Then every element of~$\sla$ is a symmetry of system~\eqref{demy},~\eqref{lsdela},~\eqref{psilabe}.
Therefore, every element of~$\sla$ can be regarded as a nonlocal symmetry 
of the Darboux--Egoroff system~\eqref{demy} 
in the covering determined by~\eqref{lsdela},~\eqref{psilabe}. 

The explicit structure of $\sla$ is described in Proposition~\ref{propalgs} below.
To present this description, we need to consider some infinite-dimensional Lie algebras. 
As has been discussed in Remark~\ref{remburshad} in Section~\ref{sec-intr}, 
we denote by $\twl$ the Lie algebra of Laurent polynomials of the form 
$$
\sum_{i=-p}^q\la^i\br_i,\qquad p,q\in\zp,\qquad \br_i\in\gl_n(\fik),\qquad
\br_i^\top=(-1)^{i+1}\br_i.
$$
The algebra $\twl$, as a vector space, is equal to the direct sum of the subalgebras 
$$
\twl_{+}=\left.\{\sum_{i=0}^q\la^i\br_i\ \ \right|\ \ q\ge 0,\ \ \ \br_i^\top=(-1)^{i+1}\br_i\big\},
\qquad\quad
\twl_{-}=\left.\big\{\sum_{i=-p}^{-1}\la^i\br_i\ \ \right|\ \ p>0,\ \ \ \br_i^\top=(-1)^{i+1}\br_i\big\}.
$$

\begin{proposition}
\lb{propalgs} 
The algebra $\sla$ is isomorphic to the Lie algebra 
$$
\twl'_{-}=
\left.\big\{\,\sum_{i=-p}^{-1}\la^i\br_i\,\in\,\twl_{-}
\ \ \ \ \right|\ \ \ \ 
p>0,\quad\br_i\in\gl_n(\fik),\quad\br_i^\top=(-1)^{i+1}\br_i,\quad
\tr(\br_i)=0\big\,\},
$$
where $\tr(\br_i)$ is the trace of $\br_i$. 
Under this isomorphism, $S_{\br}\in\sla$ corresponds to 
$$
-\la^{-1}\big(\br-\frac1n\tr(\br)\bE\big)\in\twl'_{-}
$$ 
for each $\br\in\gl_n(\fik)$ satisfying $\br^\top=\br$.
\end{proposition}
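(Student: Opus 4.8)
The plan is to reformulate the whole statement in terms of the way the symmetries $S_{\br}$ act on the pseudopotentials $\Psi_i$, to recognise that action as a dressing action of the negative part of the twisted loop algebra $\twl$, and then to read off $\sla$ as an explicit subquotient. As a first step I would put the action of $S_{\br}$ in closed form. From the proof of Proposition~\ref{propns} we already have $\lin_{S_{\br}}(\Psi(\la))=\nsy(\la)=\la^{-1}(\Psi_0\br\Psi_0^{\top}\Psi(\la)-\Psi(\la)\br)$. Writing $(P)_{+}$, $(P)_{-}$ for the parts of a Laurent series $P(\la)$ in nonnegative, resp.\ strictly negative, powers of $\la$, and using $\Psi_0^{\top}=\Psi_0^{-1}$ together with the fact that $(\Psi\,\xi_{\br}\,\Psi^{-1})_{-}=-\la^{-1}\Psi_0\br\Psi_0^{-1}$ is precisely its $\la^{-1}$-term, one gets
$$
\lin_{S_{\br}}(\Psi(\la))=(\Psi\,\xi_{\br}\,\Psi^{-1})_{+}\,\Psi,\qquad\quad \xi_{\br}:=-\la^{-1}\br\in\twl_{-}.
$$
In the covering \eqref{lsdela}, \eqref{psilabe} the matrix $\bb$ is recovered from $\Psi_0$ via $[\bb,\be{k}]=\Psi_{0,k}\Psi_0^{-1}$, so any nonlocal symmetry there is determined by its action on the $\Psi_i$; hence $S_{\br}$ depends only on $\xi_{\br}$, the map $\br\mapsto S_{\br}$ is linear, and $S_{\bE}=0$ (because $\Psi_0\Psi_0^{\top}=\bE$), so $S_{\br}=S_{\br-\frac1n\tr(\br)\bE}$.

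Second, for an arbitrary $\xi\in\twl_{-}$ I would consider the vector field $X_{\xi}:=(\Psi\,\xi\,\Psi^{-1})_{+}\Psi$ on the space of the pseudopotentials $\Psi(\la)$ and prove the bracket identity $[X_{\xi},X_{\eta}]=X_{[\xi,\eta]}$ for $\xi,\eta\in\twl_{-}$. This is a direct computation using $X_{\xi}(\Psi^{-1})=-\Psi^{-1}(\Psi\xi\Psi^{-1})_{+}$ and $\Psi[\xi,\eta]\Psi^{-1}=[\Psi\xi\Psi^{-1},\Psi\eta\Psi^{-1}]$; the one point that is not pure bookkeeping is that the product of two Laurent series in strictly negative powers of $\la$ again has only strictly negative powers, so that term drops out of the ``$+$'' part. (The same identity is implicit in the twisted loop action of~\cite{bur-shadr} recalled in Remark~\ref{remburshad}.) Since $X_{\xi_{\br}}=\lin_{S_{\br}}$ is the linearization of a genuine symmetry, and linearizations of symmetries are closed under commutator, it follows that for every $\xi$ in the Lie subalgebra $\mathfrak{h}\subseteq\twl_{-}$ generated by $\{\,\xi_{\br}\mid\br^{\top}=\br\,\}$ the field $X_{\xi}$ is the linearization of a unique symmetry $\mathcal S_{\xi}\in\sla$, and $\xi\mapsto\mathcal S_{\xi}$ is a surjective Lie algebra homomorphism $\mathfrak{h}\to\sla$ with $\mathcal S_{\xi_{\br}}=S_{\br}$.

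Third I would determine $\mathfrak{h}$ and the kernel. Decompose $\gl_n(\fik)=\mathfrak{p}\oplus\mathfrak{so}(n)\oplus\fik\bE$ with $\mathfrak{p}$ the traceless symmetric matrices; for $n\ge3$ the Cartan-type relations $[\mathfrak{p},\mathfrak{p}]=\mathfrak{so}(n)$, $[\mathfrak{so}(n),\mathfrak{p}]=\mathfrak{p}$, $[\mathfrak{so}(n),\mathfrak{so}(n)]=\mathfrak{so}(n)$ hold, and induction on the degree shows that the degree $-j$ part of $\mathfrak{h}$ is $\la^{-1}(\mathfrak{p}\oplus\fik\bE)$ for $j=1$, is $\la^{-j}\mathfrak{so}(n)$ for even $j\ge2$, and is $\la^{-j}\mathfrak{p}$ for odd $j\ge3$; that is, $\mathfrak{h}=\twl'_{-}\oplus\fik\,\la^{-1}\bE$. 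For the kernel, $X_{\la^{i}c\bE}(\Psi)=(\la^{i}c\bE)_{+}\Psi=0$ for $i<0$, so $\fik\,\la^{-1}\bE\subseteq\ker(\xi\mapsto\mathcal S_{\xi})$; conversely the kernel is contained in $\fik\,\la^{-1}\bE$ — this I would either take from the statement (used in Remark~\ref{remburshad} and proved in~\cite{bur-shadr}) that the $\twl_{-}$-action has kernel exactly the centre $Z(\twl_{-})=\bigoplus_{i<0,\ i\ \mathrm{odd}}\la^{i}\fik\bE$, whose intersection with $\mathfrak{h}$ is just $\fik\,\la^{-1}\bE$, or prove directly by noting that $X_{\xi}=0$ forces $\Psi\xi\Psi^{-1}$ to have only negative powers of $\la$ along the generic solution and then using that $\Psi_0^{-1}\Psi_1$ is, modulo \eqref{cw0ik}, an arbitrary matrix, which forces every coefficient of $\xi$ into the centre of $\gl_n(\fik)$. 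Consequently
$$
\sla\;\cong\;\mathfrak{h}\,/\,(\ker(\xi\mapsto\mathcal S_{\xi})\cap\mathfrak{h})\;=\;(\twl'_{-}\oplus\fik\,\la^{-1}\bE)\,/\,\fik\,\la^{-1}\bE\;\cong\;\twl'_{-},
$$
and tracing $S_{\br}=\mathcal S_{\xi_{\br}}$ through the identifications shows that it corresponds to the class of $\xi_{\br}=-\la^{-1}\br$, that is, to $-\la^{-1}(\br-\frac1n\tr(\br)\bE)\in\twl'_{-}$, as asserted.

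The step I expect to be the main obstacle is the second half of the kernel computation: showing that nothing outside $\fik\,\la^{-1}\bE$ acts trivially. This rests on a genuine genericity statement about how the pseudopotentials $\Psi_1,\Psi_2,\dots$ vary along solutions of the Darboux--Egoroff system, which it is cleanest to import from~\cite{bur-shadr}; by contrast the dressing formula for $\lin_{S_{\br}}$, the bracket identity $[X_{\xi},X_{\eta}]=X_{[\xi,\eta]}$, and the inductive description of $\mathfrak{h}$, while requiring some care, are essentially routine once the closed form of the first step is available.
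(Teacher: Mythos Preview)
Your proof is correct and follows a genuinely different route from the paper's.  The paper imports wholesale the Buryak--Shadrin homomorphism $\vf\colon\twl\to(\text{symmetries})$ with its explicit formulas~\eqref{vflfbh},~\eqref{vflfpsii}, verifies by inspection that $\vf(\la^{-1}\br)=-S_{\br}$, notes that $\twl'_{-}$ is generated by the traceless parts of $\la^{-1}\br$, and then proves injectivity on $\twl'_{-}$ by an algebraic trick using the \emph{nonnegative} half: the $\twl_{+}$-action~\eqref{vflam} is visibly faithful, and for any nonzero $w\in\twl'_{-}$ one can find $v\in\twl_{+}$ with $0\neq[v,w]\in\twl_{+}$, which forces $w\notin\ker\vf$.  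You instead rederive the $\twl_{-}$-action from scratch: you recognise $\nsy(\la)$ as the dressing formula $(\Psi\xi_{\br}\Psi^{-1})_{+}\Psi$, prove the bracket identity $[X_{\xi},X_{\eta}]=X_{[\xi,\eta]}$ directly via the $r$-matrix splitting, and compute the generated subalgebra $\mathfrak{h}$ by hand using the Cartan relations for $\mathfrak{sl}_n=\mathfrak{p}\oplus\mathfrak{so}(n)$.  Your construction is thus more self-contained on $\twl_{-}$ and makes the dressing structure transparent; the paper's is shorter because it outsources the homomorphism property to~\cite{bur-shadr}, and its injectivity argument via $\twl_{+}$ is cleaner than either of your options.

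One caution on your kernel step: the precise statement ``the $\twl_{-}$-action has kernel exactly the centre'' is asserted in Remark~\ref{remburshad} of \emph{this} paper with a forward reference to the present proof, not established as such in~\cite{bur-shadr}; so citing~\cite{bur-shadr} for it risks circularity.  What~\cite{bur-shadr} does supply is the $\twl_{+}$-formulas~\eqref{vflam}, from which the paper then deduces injectivity by the bracket trick above.  You could adopt that same argument (it plugs directly into your framework, since your dressing fields $X_{\xi}$ extend to $\xi\in\twl_{+}$ with the same bracket identity), or make your genericity sketch rigorous: the point that $\Psi_0^{-1}\Psi_1$ can be prescribed freely at a point, forcing the leading coefficient of any element of the kernel to be central, then inducting on order, does work but needs to be stated carefully since at higher orders the constraints~\eqref{psiijk} relate $\Psi_j$ to lower $\Psi_i$'s.
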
 
\begin{proof}
Using the infinitesimal version of the Givental--van de Leur twisted loop group action 
on the space of semisimple Frobenius manifolds~\cite{fvdls,givental1,givental2,lee1,lee2,vdltw}, 
Buryak and Shadrin~\cite{bur-shadr} described an action of the Lie algebra~$\twl$ 
by symmetries of system~\eqref{demy},~\eqref{lsdela},~\eqref{psilabe}. 
In other words, the paper~\cite{bur-shadr} presents a homomorphism from $\twl$ 
to the Lie algebra of symmetries of system~\eqref{demy},~\eqref{lsdela},~\eqref{psilabe}. 
We denote this homomorphism by~$\vf$.

The algebra $\twl$ is spanned by elements of the form $\la^{-\ell}\br\in\twl$, 
where $\ell\in\mathbb{Z}$ and $\br\in\gl_n(\fik)$ are such that $\br^\top=(-1)^{\ell+1}\br$.
Therefore, in order to describe the homomorphism $\vf$, it is sufficient 
to describe $\vf(\la^{-\ell}\br)$. 

According to the formulas from~\cite[Section 3.2]{bur-shadr}, 
the symmetry $\vf(\la^{-\ell}\br)$ is given by the following infinitesimal deformation 
of solutions of system~\eqref{demy},~\eqref{lsdela},~\eqref{psilabe} 
\beq
\lb{vevf}
\tilde{\bb}=\bb+\ve\vf(\la^{-\ell}\br)(\bb),\qquad\quad
\tilde{\Psi}_i=\Psi_i+\ve\vf(\la^{-\ell}\br)(\Psi_i), \qquad\quad i\in\zp,
\ee
\begin{equation}
\lb{vflfbh}
\vf(\la^{-\ell}\br)(\bb)=
\begin{cases}
  \nd\sum_{i,j\ge 0,\, i+j=\ell-1} (-1)^{j-1} \Psi_i\br\Psi_j^{\top}, & \ell>0,\\
  0, & \ell\le 0,
\end{cases}
\end{equation}
\begin{equation}
\lb{vflfpsii}
\vf(\la^{-\ell}\br)(\Psi_i)=
\begin{cases}
\Psi_{\ell+i}\br  
-\sum_{p=1}^\ell\sum_{q=0}^{\ell-p}(-1)^{\ell-p-q}\Psi_q\br \Psi_{\ell-p-q}^{\top}
\Psi_{p+i}, &\ell>0,\\
  \Psi_{\ell+i}\br, & \ell\le 0,\quad i\ge -\ell, \\
  0, & \ell\le 0,\quad i< -\ell.
\end{cases}
\end{equation}

That is, if $\bet$ and $\Psi_i$ satisfy~\eqref{demy},~\eqref{lsdela},~\eqref{psilabe}, 
then $\tilde{\bb}$ and $\tilde{\Psi}_i$ given by~\eqref{vevf} obey 
equations~\eqref{demy},~\eqref{lsdela},~\eqref{psilabe} up to $O(\varepsilon^2)$, 
where $\vf(\la^{-\ell}\br)(\bb)$ and $\vf(\la^{-\ell}\br)(\Psi_i)$ 
are defined by~\eqref{vflfbh},~\eqref{vflfpsii}.

In particular, according to~\eqref{vflfbh},~\eqref{vflfpsii}, 
for $\ell=1$ and $\br\in\gl_n(\fik)$ satisfying $\br^\top=\br$ we have 
$$
\vf(\la^{-1}\br)(\bb)=-\nd\Psi_0\br\Psi_0^{\top},\qquad\quad 
\vf(\la^{-1}\br)(\Psi_i)=\Psi_{i+1}\br-\Psi_0\br\Psi_0^{\top}\Psi_{i+1},
$$
which implies that $\vf(\la^{-1}\br)=-S_{\br}$.  

Let $\cen\subset\twl_{-}$ be the subalgebra spanned by the elements 
$\la^{-(2s+1)}\bE\in\twl_{-}$ for all $s\in\zp$. 
Clearly, $\cen$ is the center of the Lie algebra $\twl_{-}$, and we have $\twl_{-}=\twl'_{-}\oplus\cen$.

Since $\Psi(\la)=\sum_{i=0}^\infty\la^i \Psi_i$ and 
$\Psi^\top(-\la)=\sum_{i=0}^\infty(-\la)^i \Psi_i^\top$, 
equation~\eqref{psilabe} says that 
\beq
\lb{psiijk}
\Psi_0\Psi_0^\top=\bE,\qquad 
\sum_{i,j\ge 0,\, i+j=k}(-1)^j\Psi_i\Psi_j^\top=0\qquad\forall\,k\in\zsp.
\ee
Using~\eqref{vflfbh},~\eqref{vflfpsii}, and~\eqref{psiijk}, 
for any $s\in\zp$ we obtain $\vf(\la^{-(2s+1)}\bE)=0$.
Hence $\vf(\cen)=0$.

It is easily seen that the algebra $\twl'_{-}$ is generated 
by the elements 
$$
\la^{-1}\big(\br-\frac1n\tr(\br)\bE\big)\,\in\,\twl'_{-}
$$
for $\br\in\gl_n(\fik)$ satisfying $\br^\top=\br$.
  
As $\vf(\la^{-1}\br)=-S_{\br}$ and $\vf(\la^{-1}\bE)=0$, we get 
$\vf\big(\la^{-1}(\br-\frac1n\tr(\br)\bE)\big)=-S_{\br}$. Hence $\vf(\twl'_{-})=\sla$.

Recall that the Lie algebra $\twl_{+}$ is spanned by elements of the form 
$\la^{m}\br_m$, where $m\in\zp$ and $\br_m\in\gl_n(\fik)$ are such that $\br_m^\top=(-1)^{m+1}\br_m$.
According to~\eqref{vflfbh},~\eqref{vflfpsii}, for $m\ge 0$ one has 
\beq
\lb{vflam}
\vf(\la^{m}\br_m)(\bb)=0,\qquad\quad 
\vf(\la^{m}\br_m)(\Psi_i)=
\begin{cases}
  \Psi_{i-m}\br_m, & \quad i\ge m, \\
  0, & \quad i< m.
\end{cases}
\ee
Using~\eqref{vflam}, it is easy to show that $\vf(v)\neq 0$ for any nonzero element $v\in\twl_{+}$.
That is, $\twl_{+}\cap\ker\vf=0$.

For any nonzero element $w\in\twl'_{-}$, 
there is $v\in\twl_{+}$ such that $[v,w]\neq 0$ and $[v,w]\in\twl_{+}$.  
Therefore, for any ideal $I$ of the algebra $\twl$, 
if $\twl'_{-}\cap I\neq 0$ then $\twl_{+}\cap I\neq 0$. 
Since $\ker\vf$ is an ideal of $\twl$ and $\twl_{+}\cap\ker\vf=0$, 
we obtain $\twl'_{-}\cap\ker\vf=0$.
 
The properties $\twl'_{-}\cap\ker\vf=0$ and $\vf(\twl'_{-})=\sla$ imply 
that $\vf\big|_{\twl'_{-}}$ determines an isomorphism between $\twl'_{-}$ and $\sla$.
As has been shown above, we have $\vf\big(\la^{-1}(\br-\frac1n\tr(\br)\bE)\big)=-S_{\br}$.
\end{proof}

\section*{Acknowledgements}

The authors would like to thank S.~Shadrin for useful discussions. 
The work of SI was supported by 
the Netherlands Organisation for Scientific Research (NWO) grants 613.000.906 and 639.031.515.
MM gratefully acknowledges the support by the
Czech Science Foundation (GA\v{C}R) under project P201/12/G028.

\end{document}